\newcommand{\argmin}{\arg\!\min}
\newcommand{\Cmat}{{\boldsymbol C}}
\newcommand{\Dmat}{{\boldsymbol D}}
\newcommand{\Hmat}[0]{{{\boldsymbol H}}}
\newcommand{\Imat}{{\boldsymbol I}}
\newcommand{\Rmat}[0]{{{\boldsymbol R}}}
\newcommand{\Tmat}[0]{{{\boldsymbol T}}}
\newcommand{\Xmat}{{\boldsymbol X}}
\newcommand{\Ymat}[0]{{{\boldsymbol Y}}}
\newcommand{\Zmat}{{\boldsymbol Z}}
\newcommand{\uv}[0]{{\boldsymbol{u}}}
\newcommand{\vv}{\boldsymbol{v}}
\newcommand{\xv}{\boldsymbol{x}}
\newcommand{\yv}{\boldsymbol{y}}
\newcommand{\zv}{\boldsymbol{z}}
\newcommand{\thetav}{\boldsymbol{\theta}}
\newcommand{\ts}{^{\mathsf{T}}}
\newcommand{\inv}{^{-1}}
\newcommand{\ie}{{\em i.e.}}
\newcommand{\eg}{{\em e.g.}}
\newtheorem{definition}{Definition}
\newtheorem{theorem}{Theorem}
\newtheorem{corollary}{Corollary}
\newtheorem{assumption}{Assumption}
\newtheorem{lemma}{Lemma}
\begin{document}
%
\title{Plug-and-Play Algorithms for Video\\ Snapshot Compressive Imaging}
%
%
%
%

\author{
        Xin~Yuan,~\IEEEmembership{Senior~Member,~IEEE,}
        Yang~Liu,
        Jinli~Suo,
       Fr\'edo~Durand
        and~Qionghai~Dai,~\IEEEmembership{Senior~Member,~IEEE}
\IEEEcompsocitemizethanks{
\IEEEcompsocthanksitem X.~Yuan is with Bell Labs, Murray Hill, New Jersey, 07974 USA.\protect\\
E-mail: xyuan@bell-labs.com.
\IEEEcompsocthanksitem Y.~Liu and F. Durand are with the Computer
Science and Artificial Intelligence Laboratory, Massachusetts Institute of
Technology, Cambridge, MA 02139, USA.\protect\\
E-mails: \{yliu12,~fredo\}@mit.edu.
\IEEEcompsocthanksitem J.~Suo and Q.~Dai are with the Department of Automation, Tsinghua University, Beijing 100084, China.\protect\\
E-mails: \{jlsuo, qhdai\}@tsinghua.edu.cn.
\IEEEcompsocthanksitem  Our code is available at \url{https://github.com/liuyang12/PnP-SCI_python}.
}
\thanks{Manuscript updated \today.}
}

\IEEEtitleabstractindextext{%
\begin{abstract}
We consider the reconstruction problem of video snapshot compressive imaging (SCI), which captures high-speed videos using a low-speed 2D sensor (detector). The underlying principle of SCI is to modulate sequential high-speed frames with different masks and then these encoded frames are integrated into a snapshot on the sensor and thus the sensor can be of low-speed.
On one hand, video SCI enjoys the advantages of low-bandwidth, low-power and low-cost. On the other hand, applying SCI to large-scale problems (HD or UHD videos) in our daily life is still challenging and one of the bottlenecks lies in the reconstruction algorithm. Exiting algorithms are either too slow (iterative optimization algorithms) or not flexible to the encoding process (deep learning based end-to-end networks).  
In this paper, we develop fast and flexible algorithms for SCI based on the plug-and-play (PnP) framework. In addition to the PnP-ADMM method, 
we further propose the PnP-GAP (generalized alternating projection) algorithm with a lower computational workload. 
We first employ the {\em image} deep denoising priors 
to show that PnP can recover 
a UHD color video with 30 frames from a snapshot measurement.
Since videos have strong temporal correlation, by employing the {\em video} deep denoising priors, we achieve a significant improvement in the results. Furthermore, we extend the proposed PnP algorithms to the color SCI system using mosaic sensors, where each pixel only captures the red, green or blue channels. A joint reconstruction and demosaicing paradigm is developed for flexible and high quality reconstruction of color video SCI systems. 
Extensive results on both simulation and real datasets verify the superiority of our proposed algorithm.
\end{abstract}

\begin{IEEEkeywords}
Compressive sensing, deep learning, computational imaging, coded aperture, image processing, video processing, coded aperture compressive temporal imaging (CACTI), plug-and-play (PnP) algorithms.
\end{IEEEkeywords}}

\maketitle

\section{Introduction\label{Sec:Intro}}
Conventional imaging systems are developed to {\em capture} more data, such as high-resolutions and large field-of-view.
However, to save these captured data, image/video compression methods are immediately applied due to the limited memory and bandwidth. 
This ``capturing images first and processing afterwards" cannot meet the unprecedented demand in recent explosive growth of artificial intelligence and robotics.
To address these challenges, 
computational imaging~\cite{Altmanneaat2298,Mait18CI} constructively combines optics, electronics and algorithms for optimized performance~\cite{BradyNature12,Brady18Optica,Ouyang2018DeepLM} or to provide new abilities~\cite{Brady15AOP,Tsai15OL} to imaging systems. 
Different from conventional imaging, these computational imaging systems usually capture the data in an indirect manner, mostly compressed or coded. 

\begin{figure*}[!htbp]
	\begin{center}
		\includegraphics[width=1\linewidth]{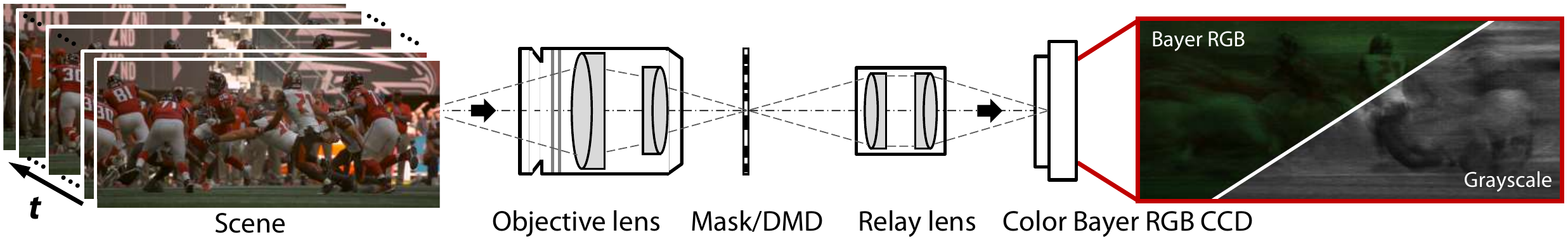}
	\end{center}
	\vspace{-3mm}
	\caption{Schematic of a color video SCI system and its snapshot measurement (showing in Bayer RGB mode). A ``RGGB'' Bayer pattern is shown.}
	\label{fig:video_color_sci}
\end{figure*}

This paper considers one important branch of computational imaging with promising applications, the snapshot compressive imaging (SCI)~\cite{Patrick13OE,Wagadarikar08CASSI}, which utilizes a two-dimensional (2D) camera to capture the 3D  video or spectral data in a snapshot. 
Such imaging systems adopt \emph{compressed sampling} on a set of consecutive images--video frames (\ie, CACTI~\cite{Patrick13OE,Yuan14CVPR}) or spectral channels (\ie, CASSI~\cite{Wagadarikar09CASSI})--in accordance with an encoding procedure and \emph{integrating} these sampled signals along time  or spectrum to obtain the final compressed measurements. With this technique, SCI systems can capture the high-speed motion~\cite{Hitomi11ICCV,Reddy11CVPR,Yuan16BOE,Deng19_sin} and high-resolution spectral information~\cite{Gehm07,Miao19ICCV,Yuan15JSTSP} but with low memory, low bandwidth, low power and potentially low cost. In this work, we focus on video SCI. 

There are two critical challenges in SCI and other computational imaging systems. The first one is the hardware imaging system to capture the {compressed measurements} and the second one is the reconstruction algorithm to retrieve the desired signal. From the encoder-decoder perspective, we call the imaging system ``hardware encoder" 
and the reconstruction algorithm ``software decoder". 

For the first challenge in video SCI, different hardware encoders have been built and the underlying principle is to modulate the high-speed scene with a higher frequency than the sampling speed of the camera (Fig.~\ref{fig:video_color_sci}). Various coding strategies have been proposed, such as using a spatial light modulator (SLM), including a digital micromirror device (DMD)~\cite{Hitomi11ICCV,Qiao2020_APLP,Reddy11CVPR,Sun17OE} or a dynamic mask~\cite{Patrick13OE,Yuan14CVPR}. 
The patterns of DMD will change tens of times during one exposure time of the camera to impose the compression; a physical mask is moving within one exposure time so that different variants of the mask are imposed on the high-speed scenes to achieve the high-speed modulation.

Regarding the second challenge of software decoder, various algorithms have been employed and developed for SCI reconstruction.
In addition to the widely used TwIST~\cite{Bioucas-Dias2007TwIST}, Gaussian Mixture Model (GMM) in~\cite{Yang14GMMonline,Yang14GMM} assumes the pixels within a spatial-temporal patch are drawn  from a GMM. GAP-TV~\cite{Yuan16ICIP_GAP} adopts the idea of total variation (TV)  minimization under the generalized alternating projection (GAP)~\cite{Liao14GAP} framework. 
Recently, DeSCI proposed in~\cite{Liu18TPAMI}  has led to state-of-the-art results.
However, the slow speed of DeSCI precludes its real applications, especially to the HD ($1280\times720$), FHD ($1920\times1080$) or UHD ($3840\times1644$ and $3840\times2160$ in Fig.~\ref{fig:comp_largescale}) videos, which are now commonly used in our daily life.
Recall that DeSCI needs more than one hour to reconstruct a $256\times256\times8$ video from a snapshot measurement.
GAP-TV, by contrast, as a fast algorithm, cannot provide decent reconstructions to be used in real applications (in general, this needs the PSNR $>$30dB).
An alternative solution is to train an end-to-end network to reconstruct the videos~\cite{Ma19ICCV,Qiao2020_APLP,Li2020ICCP,Cheng20ECCV_BIRNAT} for the SCI system. On one hand, this approach can finish the task within seconds (after training) and by the appropriate usage of multiple GPUs, an end-to-end sampling and reconstruction system can be built. On the other hand, this method loses the {\em robustness} of the network since whenever the sensing matrix (encoding process) changes, a new network has to be re-trained. Moreover, it cannot be used in adaptive video sensing~\cite{Yuan13ICIP}.

Therefore, it is desirable to devise an {\em efficient} and {\em flexible} algorithm for video SCI reconstruction, especially for large-scale problems. This will pave the way of applying SCI in our daily life. 
Towards this end, this paper develops plug-and-play (PnP) algorithms for SCI.

\subsection{Related Work \label{Sec:Related}}
From the hardware side, in addition to capture high-speed videos, various other SCI systems have been developed to capture  3D multi/hyper-spectral images~\cite{Cao16SPM,Wang18PAMI,Yuan15JSTSP,Meng2020_OL_SHEM,Meng20ECCV_TSAnet}, 4D spectral-temporal~\cite{Tsai15OL}, spatial-temporal~\cite{Qiao2020_CACTI}, depth~\cite{Llull15Optica,Yuan16AO} and polarization~\cite{Tsai15OE} images, etc. These systems share the similar principle of modulating the high-dimensional signals using high-frequency patterns. 

From the algorithm side, 
early systems usually employed the algorithms for inverse problems of other applications such as compressive sensing~\cite{Candes06ITT,Donoho06ITT}.
In general, the SCI reconstruction is an ill-posed problem and diverse priors and regularization methods have been used.
Among these priors, the TV~\cite{Rudin92_TV}  and sparsity~\cite{Patrick13OE} are widely used. 
Representative algorithms include TwIST~\cite{Bioucas-Dias2007TwIST} and GAP-TV~\cite{Yuan16ICIP_GAP}. 
Recently developed algorithms specifically for SCI include GMM~\cite{Yang14GMMonline,Yang14GMM} and DeSCI~\cite{Liu18TPAMI}, where GMM methods use mixture of Gaussian distributions to model video patches and DeSCI applies weighted nuclear norm minimization~\cite{Gu14CVPR} on video patches into the alternating direction method of multipliers (ADMM)~\cite{Boyd11ADMM} framework.  

As mentioned before, one main bottleneck of these optimization algorithms is the slow running speed. 
Inspired by recent advances of deep learning on image restoration~\cite{zhang2017beyond}, researchers have started using deep learning in computational imaging~\cite{Iliadis18DSPvideoCS,Jin17TIP,Kulkarni2016CVPR,LearningInvert2017,George17lensless,Yuan18OE}. Some networks have been proposed for SCI reconstruction~\cite{Ma19ICCV,Miao19ICCV,Qiao2020_APLP,Yoshida18ECCV,Li2020ICCP}. 
After training, these algorithms can provide results instantaneously and thus they can lead to end-to-end systems~\cite{Meng20ECCV_TSAnet,Qiao2020_CACTI} for SCI.
However, these end-to-end deep learning methods rely heavily on the training data and further are not flexible.
Specifically, when one network is trained for a specific SCI system, it cannot be used in other SCI systems provided different modulation patterns or different compression rates.

In summary, optimization methods are slow but deep learning algorithms are not flexible. To cope with these issues, most recently, researchers start to integrate the advantages of both by applying the deep denoisers into the PnP framework~\cite{Venkatakrishnan_13PnP,Sreehari16PnP,Chan2017PlugandPlayAF,Ryu2019PlugandPlayMP}.
Though PnP can date back to 2013~\cite{Venkatakrishnan_13PnP}, it is getting powerful in real inverse problems because the usage of advanced deep denoising networks~\cite{Zhang17SPM_deepdenoise,Zhang18TIP_FFDNet}. Recently, great successes have been achieved by PnP in other applications. 

Bearing these concerns in mind, in this work, we integrate various denoisers into PnP framework for SCI reconstruction.
Our PnP algorithms can not only provide excellent results but also are robust to different coding process and thus can be used in adaptive sensing and large-scale problems~\cite{Yuan20CVPR}.

\subsection{Contributions of This Work}
Generally speaking, reconstruction of SCI aims to solve the trilemma, \ie, speed, accuracy and flexibility. To address this challenge, our preliminary work~\cite{Yuan20CVPR} applied {\em frame-wise image denoiser} into the PnP framework to achieve excellent results in video SCI. 
Specially, we made the following contributions in~\cite{Yuan20CVPR}.  
\begin{itemize}
	\item[1)] Inspired by the plug-and-play ADMM~\cite{Chan2017PlugandPlayAF} framework, we extend it to SCI and show that PnP-ADMM converges to a fixed point by considering the hardware constraints and the special structure of the sensing matrix in SCI~\cite{Jalali19TIT_SCI}.
	\item[2)] We propose an efficient PnP-GAP algorithm by using various {denoisers} (Fig.~\ref{fig:demo}) into the generalized alternating projection~\cite{Liao14GAP,Yuan16ICIP_GAP} framework, which has a lower computational workload than PnP-ADMM. We prove that, under proper assumptions, the solution of PnP-GAP also converges to a fixed point. 
	\item[3)] 
	By employing the deep image denoiser FFDNet~\cite{Zhang18TIP_FFDNet} into PnP-GAP, 
	we show that a FHD color video (1920$\times$1080$\times$3$\times$30 with 3 denoting the RGB channels and 30 the frame number) can be recovered from a snapshot measurement (Fig.~\ref{fig:comp_largescale}) efficiently {with PSNR close to 30dB}.
	Compared with an end-to-end network~\cite{Qiao2020_APLP}, dramatic resources have been saved since no re-training is required. 
	This further makes the UHD compression using SCI to be feasible (a {3840$\times$1644$\times$3$\times$40 video is reconstructed with PSNR above 30dB in Fig.~\ref{fig:comp_largescale}}). 
	\item[4)] We apply our developed PnP algorithms to extensive simulation and real datasets (captured by real SCI cameras) to verify the efficiency and robustness of our proposed algorithms. We show that the proposed algorithm can obtain results on-par to DeSCI but with a significant reduction of computational time.  
\end{itemize}

Since videos are image sequences and they are highly correlated, it is expected that a {\em video denoiser} will boost up the results of video SCI reconstruction. 
Moreover, the color-video SCI has not been fully exploited by color image/video denoising algorithms.
In particular, we make additional contributions in this paper.
\begin{itemize}
    \item[5)] In addition to the PnP-FFDNet~\cite{Yuan20CVPR}, which integrates the image denoiser, FFDNet~\cite{Zhang18TIP_FFDNet}, into PnP-GAP, we further integrate the most recent video denoiser, FastDVDnet~\cite{Tassano_2020_CVPR}, into PnP-GAP to achieve better results than those reported in \cite{Yuan20CVPR} (Fig.~\ref{fig:demo}). 
    \item[6)] We propose joint reconstruction and demosaicing for color SCI video reconstruction. In color SCI, since each pixel only capture one of the red (R), green (G) or blue (B) channels, previous methods reconstruct each channel (as grayscale images/videos) separately and then use off-the-shelf demosaicing methods to get the final color video. To overcome the limitations of these steps, we jointly reconstruct and demosaic the color video in one shot and better results are achieved. 
    We also build an RGB mid-scale size dataset as benchmark data for the color video SCI problem.
    \item[7)] We verify the proposed PnP-FastDVDnet in the measurements  captured by the newly built SCI camera in~\cite{Qiao2020_APLP} at different compressive sampling rates from 10 to 50. This clearly demonstrates the feasibility and flexibility of the proposed algorithm on the large-scale data. 
    Furthermore, this verifies that a video SCI system can capture high-speed videos at 2500 frames per second (fps) by using a camera working at 50 fps.
\end{itemize}


\begin{figure}[!htbp]
	\begin{center}
		\includegraphics[width=1\linewidth]{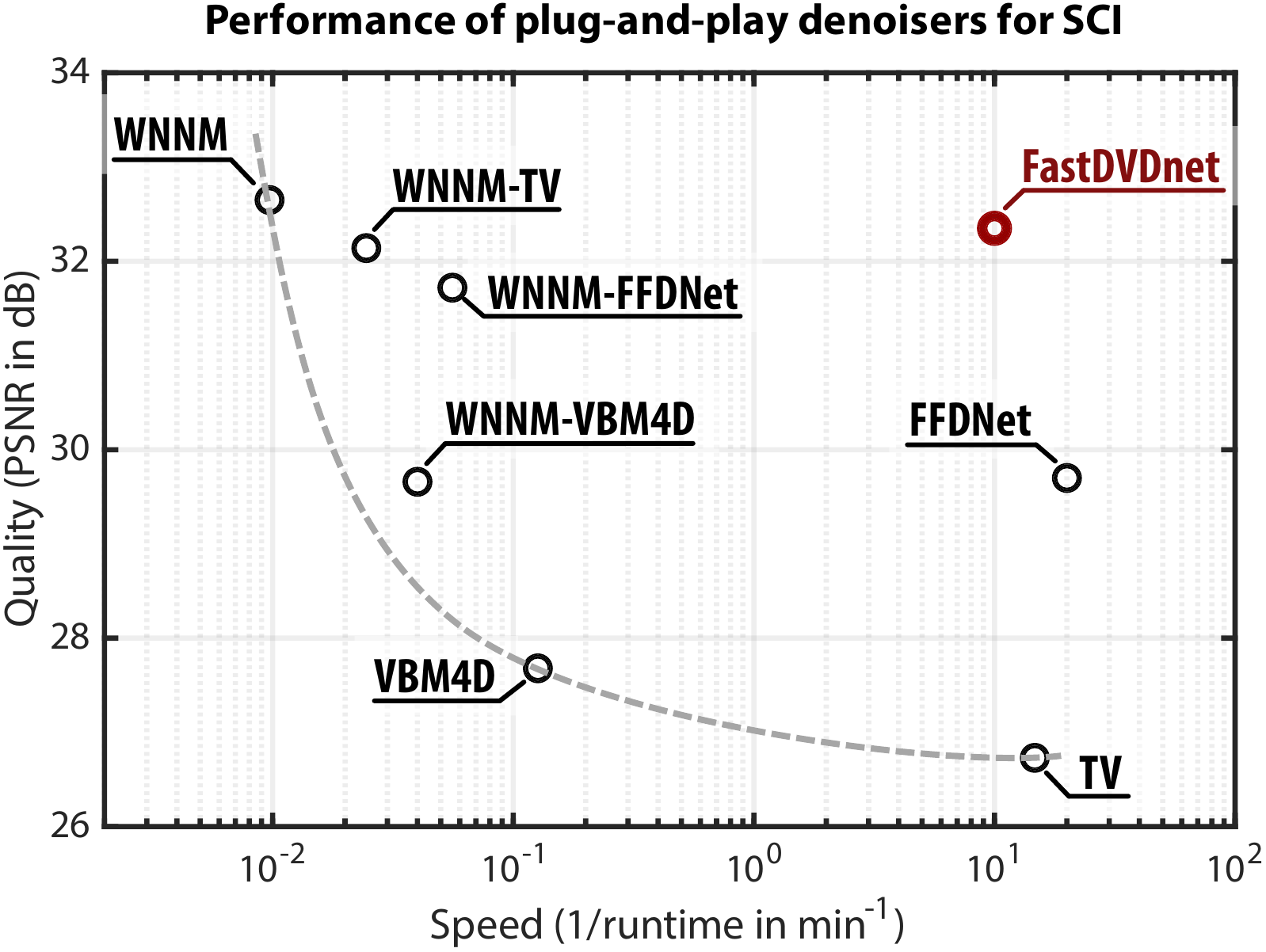}
	\end{center}
	\vspace{-4mm}
	\caption{Trade-off of quality and speed of various plug-and-play denoising algorithms for SCI reconstruction. {Average PSNR of the six grays-scale datasets~\cite{Yuan20CVPR} are shown.}}
	\label{fig:demo}
\end{figure}

\subsection{Organization of This Paper}
The rest of this paper is organized as follows. Sec.~\ref{Sec:SCImodel} introduce the mathematical model of both grayscale and color SCI. Sec.~\ref{Sec:PnP_ADMM} develops the PnP-ADMM under the SCI hardware constraints and shows that PnP-ADMM converges to a fixed point. Sec.~\ref{Sec:PnP_GAP} proposes the PnP-GAP algorithm and proves its convergence\footnote{We observed some error in the proof of the global convergence of PnP-GAP in~\cite{Yuan20CVPR}. Specifically, the lower bound of the second term in Eq. (25) in~\cite{Yuan20CVPR} should be 0. Therefore, the global convergence of PnP-GAP dees not hold anymore. Instead, we provide another convergence proof of PnP-GAP in this paper.}. 
Sec.~\ref{Sec:P3} integrates various denoisers into to the PnP framework for SCI reconstruction and develops the joint demosaicing and reconstruction for color SCI.
Extensive results of both  simulation (grayscale benchmark, mid-scale color and large-scale color) and real data are presented in Sec.~\ref{Sec:results} and Sec.~\ref{Sec:realdata}, respectively.
Sec.~\ref{Sec:Con} concludes the paper.


\begin{figure*}[!htbp]
	\begin{center}
        \includegraphics[width=1\linewidth]{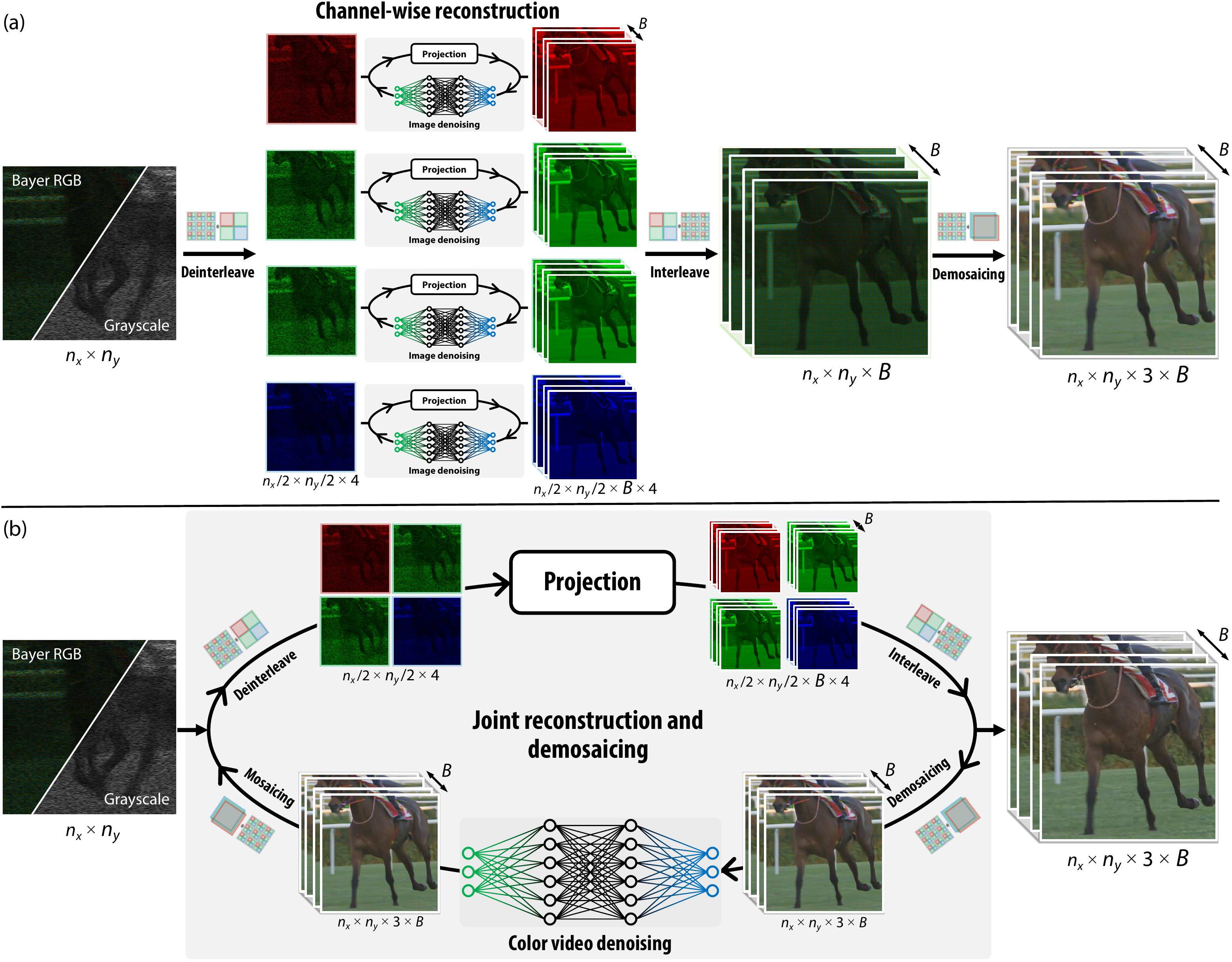}
	\end{center}
	\vspace{-2mm}
	\caption{Reconstruction of color SCI using mosaic sensor measurements. (a) Color SCI reconstruction by independently reconstruct RGGB channels using grayscale {\em image} denoising and then perform demosicing (we proposed this in~\cite{Yuan20CVPR}). The raw measurement (and the mask) is divided into four color channels, R (red), G1 (green), G2 (green) and B (blue) and these channels are reconstructed separately using the PnP-GAP with FFDNet. Then these channels are interleaved and demosaiced to obtain the final color video. (b) Proposed (in this paper) joint reconstruction and demosaicing for color SCI. The raw measurement (and the mask) is sent to the proposed PnP framework using GAP/ADMM with {\em color denoising} by FFDNet or FastDVDnet to output the desired color video directly. Note the demosaicing and {\em color video denoising} are embedded in each iteration.}
	\label{fig:Bayer_sci}
\end{figure*}

\section{Mathematical Model of SCI~\label{Sec:SCImodel}}
As depicted in Fig.~\ref{fig:video_color_sci}, in the video SCI system \eg, CACTI~\cite{Patrick13OE}, consider that a  $B$-frame (grayscale) video $\Xmat \in \mathbb{R}^{n_x \times n_y \times B}$ is modulated  and compressed by $B$ sensing matrices (masks) $\Cmat\in \mathbb{R}^{n_x \times n_y \times B}$, and the measurement frame $\Ymat \in \mathbb{R}^{n_x\times n_y} $ can be expressed as~\cite{Patrick13OE,Yuan14CVPR}
\begin{equation}\label{Eq:System}
  \Ymat = \sum_{b=1}^B \Cmat_b\odot \Xmat_b + \Zmat,
\end{equation}
where $\Zmat \in \mathbb{R}^{n_x \times n_y }$ denotes the noise; $\Cmat_b = \Cmat(:,:,b)$ and $\Xmat_b = \Xmat(:,:,b) \in \mathbb{R}^{n_x \times n_y}$ represent the $b$-th sensing matrix (mask) and the corresponding video frame respectively, and $\odot$ denotes the Hadamard (element-wise) product. 
Mathematically, the measurement in \eqref{Eq:System} can be expressed by 
\begin{equation}\label{Eq:ghf}
\yv = \Hmat \xv + \zv,
\end{equation}
where $\yv = \text{Vec}(\Ymat) \in \mathbb{R}^{n_x n_y}$ and $\zv= \text{Vec}(\Zmat) \in \mathbb{R}^{n_x n_y}$ with $\text{Vec}(\cdot)$ vectorizing the ensued matrix by stacking columns. Correspondingly, the video signal $\xv \in \mathbb{R}^{n_x n_y B}$ is
\begin{equation}
\xv = \text{Vec}(\Xmat) = [\text{Vec}(\Xmat_1)\ts,..., \text{Vec}(\Xmat_B)\ts]\ts.
\end{equation}
Unlike the global transformation based compressive sensing~\cite{Candes05compressed,donoho2006compressed}, the sensing matrix $\Hmat \in \mathbb{R}^{n_x n_y \times n_x n_y B}$ in video SCI is sparse and is constituted by a concatenation of diagonal matrices
\begin{equation}\label{Eq:Hmat_strucutre}
\Hmat = [\Dmat_1,...,\Dmat_B].
\end{equation}
where $\Dmat_b = \text{diag}(\text{Vec}(\Cmat_b)) \in {\mathbb R}^{n \times n}$ with $n = n_x n_y$, for $b =1,\dots B$.
Consequently, the {\em sampling rate} here is equal to  $1/B$. It has been proved recently in~\cite{Jalali18ISIT,Jalali19TIT_SCI} that the reconstruction error of SCI is bounded even when $B>1$.

In the color video case, as shown {in Figs.~\ref{fig:Bayer_sci}, \ref{fig:comp_frames_midscale}, \ref{fig:comp_largescale} and \ref{fig:real_color_hammer}}, the raw data captured by the generally used Bayer pattern sensors have ``RGGB" channels. 
Since the mask is imposed on each pixel, the generated measurement can be treated as a grayscale image as in Fig.~\ref{fig:real_chopperwheel} and when it is shown in color, the demosaicing procedure cannot generate the right color due to mask modulation (Fig.~\ref{fig:Bayer_sci}). 
In previous papers, during reconstruction, we first recover each of these four channel independently and then perform demosaicing in the reconstructed videos (upper part in Fig.~\ref{fig:Bayer_sci}). The final demosaiced RGB video is the desired signal~\cite{Yuan14CVPR,Yuan20CVPR}.
In this case, the raw measurement is decoupled into four components $\{\Ymat^{(r)},\Ymat^{(g_1)},\Ymat^{(g_2)},\Ymat^{(b)}\} \in {\mathbb R}^{\frac{n_x}{2}\times \frac{n_y}{2}}$. Similarly, the corresponding masks and videos are denoted by $\{\Cmat^{(r)},\Cmat^{(g_1)},\Cmat^{(g_2)},\Cmat^{(b)}\} \in {\mathbb R}^{\frac{n_x}{2}\times \frac{n_y}{2}\times B}$, $\{\Xmat^{(r)},\Xmat^{(g_1)},\Xmat^{(g_2)},\Xmat^{(b)}\} \in {\mathbb R}^{\frac{n_x}{2}\times \frac{n_y}{2}\times B}$, respectively. 
The forward model for each channel is now
\begin{eqnarray}
 \Ymat^{(r)} &=&  \sum_{b=1}^B \Cmat^{(r)}_b\odot \Xmat^{(r)}_b + \Zmat^{(r)}, \\
 \Ymat^{(g_1)} &=& \sum_{b=1}^B \Cmat^{(g_1)}_b\odot \Xmat^{(g_1)}_b + \Zmat^{(g_1)}, \\
 \Ymat^{(g_2)} &=& \sum_{b=1}^B \Cmat^{(g_2)}_b\odot \Xmat^{(g_2)}_b + \Zmat^{(g_2)}, \\
 \Ymat^{(b)} &=& \sum_{b=1}^B \Cmat^{(b)}_b\odot \Xmat^{(b)}_b + \Zmat^{(b)}. 
\end{eqnarray}
In this color case, the desired signal is $\Xmat^{(rgb)}\in \mathbb {R}^{n_x\times n_y\times 3\times B}$, where $3$ denotes the R, G and B channels in the color video. The demosaicing is basically an interpolation process from $\Xmat^{(r)}$  to $\tilde{\Xmat}^{(r)} \in  \mathbb {R}^{n_x\times n_y\times B}$, from $\{\Xmat^{(g_1)},\Xmat^{(g_2)}\}$  to $\tilde{\Xmat}^{(g)} \in  \mathbb {R}^{n_x\times n_y\times B}$ and from $\Xmat^{(b)}$  to $\tilde{\Xmat}^{(b)} \in  \mathbb {R}^{n_x\times n_y\times B}$.
Note that the interpolation rate for red and blue channel is from 1 pixel to 4 pixels, whereas for the green channel it is from 2 pixels to 4 pixels.

Utilizing the vectorized formulation, let $\{\tilde{\xv}^{(r)},\tilde{\xv}^{(g)},\tilde{\xv}^{(b)}\}\in {\mathbb R}^{n_xn_yB}$ denote the vectorized representations of $\{\tilde{\Xmat}^{(r)},\tilde{\Xmat}^{(g)},\tilde{\Xmat}^{(b)}\}$, $\{{\yv}^{(r)},{\yv}^{(b)}\}\in {\mathbb R}^{\frac{n_xn_y}{4}}$ denote the vectorized representations of $\Ymat^{(r)},{\Ymat}^{(b)}\}$ and $\yv^{(g)} = \left[\begin{array}{c}\yv^{(g_1)}\\ \yv^{(g_2)}\end{array}\right]\in {\mathbb R}^{\frac{n_xn_y}{2}} $ denote the concatenated vector-representation of $\{\Ymat^{(g_1)} ,\Ymat^{(g_2)} \}$. Similar notations are also used for the noise term. We arrive at
\begin{eqnarray}
 \yv^{(r)}&=& \Hmat^{(r)} \tilde{\xv}^{(r)} + \zv^{(r)},\\
 \yv^{(g)}&=& \Hmat^{(g)} \tilde{\xv}^{(g)} + \zv^{(g)},\\
 \yv^{(b)}&=& \Hmat^{(b)} \tilde{\xv}^{(b)}+ \zv^{(b)},
\end{eqnarray}
where $\{\Hmat^{(r)},\Hmat^{(b)}\} \in {\mathbb R}^{\frac{n_xn_y}{4} \times n_xn_yB}$
and $\Hmat^{(g)}\in {\mathbb R}^{\frac{n_xn_y}{2} \times n_xn_yB}$. The structures of $\{\Hmat^{(r)},\Hmat^{(g)},\Hmat^{(b)}\}$ are similar to \eqref{Eq:Hmat_strucutre} for grayscale video but include the down-sampling (mosaic) process, which decimate pixels in an interleaving way following the mosaic pattern of the sensor.  


Following this, let the captured mosaic compressed measurement and the desired color video be $\yv \in{\mathbb R}^{n_xn_y}$ and $\xv \in{\mathbb R}^{3n_xn_yB}$, respectively. We have
\begin{eqnarray}
 \yv = \left[\begin{array}{l}
 \yv^{(r)} \\
  \yv^{(g)}\\
    \yv^{(b)} \end{array}\right], \quad   \xv = \left[\begin{array}{l}
 \tilde{\xv}^{(r)} \\
  \tilde{\xv}^{(g)}\\
    \tilde{\xv}^{(b)} \end{array}\right],
\end{eqnarray}
and the full froward model of color-video SCI can be modeled by
\begin{eqnarray} \label{eq:forward_color}
 \yv = \underbrace{\left[\begin{array}{ccc}
      \Hmat^{(r)} & {\bf 0} & {\bf 0} \\
      {\bf 0}&  \Hmat^{(g)} &{\bf 0} \\
       {\bf 0}&  {\bf 0}& \Hmat^{(b)}
 \end{array}\right]}_{\Hmat} \xv + \zv.
\end{eqnarray}
This formulation along with the grayscale one is the unique forward model of video SCI and apparently, the color one is more challenging. 

As depicted in the top-part of Fig.~\ref{fig:Bayer_sci}, previous studies usually first reconstruct the four Bayer channels of the video independently and then employ the off-the-shelf demosaicing algorithm to get the desired color videos~\cite{Liu18TPAMI,Yuan20CVPR}.
However, this final performance of the reconstructed video will be limited by both steps (channel-wise reconstruction and demosaicing). 
In this paper, we derive a joint reconstruction and demosaicing framework for color video SCI (lower part in Fig.~\ref{fig:Bayer_sci}) directly based on the model derived in \eqref{eq:forward_color}.  More importantly, it is a unified PnP framework, where different demosaicing and color denoising algorithms can be used. Please refer to the details in Sec.~\ref{Sec:jointcsci}.

\section{Plug-and-Play ADMM for SCI~\label{Sec:PnP_ADMM}}
The inversion problem of SCI can be modeled as
\begin{equation}
  {\hat \xv} = \argmin_{\xv} f(\xv) + \lambda g(\xv), \label{Eq:uncontr}
\end{equation}
where $f(\xv)$ can be seen as the forward imaging model, \ie, $\|\yv-\Hmat\xv\|_2^2$ and $g(\xv)$ is a prior being used. This prior is usually playing the role of a regularizer.
While diverse priors have been used in SCI such as TV, sparsity and low-rank, in this work, we focus on the deep denoising prior, which has shown superiority recently on various image restoration tasks.
Note that, since SCI systems aim to reconstruct high-speed video, a video deep denoising prior is desired~\cite{Tassano_19ICIP_DVDnet,Tassano_2020_CVPR}. On the other hand, since videos are essentially consequent images, recently advanced deep denoising priors for images can also be used~\cite{Zhang17SPM_deepdenoise,Zhang18TIP_FFDNet}.   
It has been shown in our preliminary paper that an efficient image denoising prior can lead to good results for SCI~\cite{Yuan20CVPR}. However, this frame-wise image denoising prior~\cite{Zhang18TIP_FFDNet} limits the performance of video denoising since it ignored the strong temporal correlation in neighbouring frames. In this work, we employ the most recent video denoiser, FastDVDnet~\cite{Tassano_2020_CVPR}, as the denoising prior in our PnP framework and it leads to better results than those reported in~\cite{Yuan20CVPR}.

\subsection{Review the Plug-and-Play ADMM}
Using ADMM~\cite{Boyd11ADMM}, by introducing an auxiliary parameter $\vv$, the unconstrained optimization in Eq.~\eqref{Eq:uncontr} can be converted into
\begin{equation} \label{Eq:ADMM_xv}
({\hat \xv}, {\hat \vv}) = \argmin_{\xv,\vv} f(\xv) + \lambda g(\vv), {\text{ subject to }} \xv = \vv.
\end{equation}  
This minimization can be solved by the following sequence of sub-problems~\cite{Chan2017PlugandPlayAF}
\begin{align}
\xv^{(k+1)} &=  \argmin_{\xv} f(\xv) + \frac{\rho}{2} \|\xv - (\vv^{(k)}-\frac{1}{\rho} \uv^{(k)})\|_2^2,  \label{Eq:solvex}\\
\vv^{(k+1)} &=   \argmin_{\vv} \lambda g(\vv) + \frac{\rho}{2}\|\vv - (\xv^{(k)}+\frac{1}{\rho} \uv^{(k)})\|_2^2, \label{Eq:solvev}\\
\uv^{(k+1)} &= \uv^{(k)} + \rho (\xv^{(k+1)} - \vv^{(k+1)}), \label{Eq:u_k+1}
\end{align}
where the superscript $^{(k)}$ denotes the iteration number.

In SCI and other inversion problems, $f(\xv)$ is usually a quadratic form and there are various solutions to Eq.~\eqref{Eq:solvex}. In PnP-ADMM, the solution of Eq.~\eqref{Eq:solvev} is replaced by an {\em off-the-shelf} denoising algorithm, to yield
\begin{equation}
{  \vv^{(k+1)} = {\cal D}_{\sigma_k} (\xv^{(k)}+\frac{1}{\rho} \uv^{(k)})}.
\end{equation}
where ${\cal D}_{\sigma_k}$ denotes the denoiser being used with $\sigma_k$ being the standard deviation of the assumed additive white  Gaussian noise in the $k$-th iteration.
In~\cite{Chan2017PlugandPlayAF}, the authors proposed to update the $\rho$ in each iteration by $\rho_{k+1} = \gamma_k \rho_k$ with $\gamma_k \ge 1$ and setting $\sigma_k = \sqrt{\lambda/\rho_k}$ for the denoiser. 
This essentially imposed the {\em non-increasing denoiser} in Assumption~\ref{Ass:non_in} defined in Sec.~\ref{Sec:Conv_gap}. 
Chen {\em et al.}~\cite{Chan2017PlugandPlayAF} defined the {\em bounded denoiser} and proved the {\em fixed point} convergence of the PnP-ADMM.

\begin{definition} 
(Bounded Denoiser~\cite{Chan2017PlugandPlayAF}): A bounded denoiser with a parameter $\sigma$ is a function 
${\cal D}_{\sigma}: {\mathbb R}^n \rightarrow {\mathbb R}^n$ such that for any input $\xv\in {\mathbb R}^{n}$, 
	\begin{equation}
  	\frac{1}{n}\|{\cal D}_{\sigma}(\xv) - \xv\|_2^2 \le \sigma^2 C,
	\end{equation}
	for some universal constant $C$ independent of $n$ and $\sigma$. 
	\label{Definition1}
\end{definition}
With this definition (constraint on the denoiser) and the assumption of $f:[0,1]^n \rightarrow {\mathbb R}$ having bounded gradient, which is for any $\xv \in [0,1]^n$, there exists $L < \infty$ such that $\|\nabla f(\xv)\|_2/\sqrt{n} \le L$, the authors of~\cite{Chan2017PlugandPlayAF} have proved that: 
the iterates of the PnP-ADMM demonstrates a fixed-point convergence. That is, 
there exists $(\xv^*, \vv^*, \uv^*)$ such that $\|\xv^{(k)} - \xv^*\|_2 \rightarrow 0$, $\|\vv^{(k)} - \vv^*\|_2 \rightarrow 0$, and $\|\uv^{(k)} - \uv^*\|_2 \rightarrow 0$ as $ k\rightarrow \infty$.
%

\subsection{PnP-ADMM for SCI}
In the following  derivation, we focus on the grayscale case and it is ready to extend to the color SCI cases.
In SCI, with the model stated in Eq.~\eqref{Eq:ghf}, $\xv \in {\mathbb R}^{nB}$, we consider the loss function $f(\xv)$ as
\begin{equation}
f(\xv) = \frac{1}{2}\|\yv - \Hmat \xv\|_2^2.
\end{equation}
Consider all the pixel values are normalized into $[0,1]$. 

\begin{lemma} In SCI, the function $f(\xv) = \frac{1}{2}\|\yv-\Hmat\xv\|_2^2$ has bounded gradients, \ie, $\|\nabla f(\xv)\|_2\leq B \|\xv\|_2$. 
	\label{Lemma:fx_grad}
\end{lemma}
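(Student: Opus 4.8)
The plan is to differentiate $f$ and then reduce the desired bound to a single spectral-norm estimate on the SCI sensing matrix, exploiting the special ``concatenation of diagonal blocks'' structure in \eqref{Eq:Hmat_strucutre}. First I would write $\nabla f(\xv) = \Hmat\ts(\Hmat\xv - \yv) = \Hmat\ts\Hmat\xv - \Hmat\ts\yv$, so that it suffices to control $\|\Hmat\|_2$ and $\|\Hmat\ts\Hmat\|_2 = \|\Hmat\|_2^2$.

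The key step is the observation that, since $\Hmat = [\Dmat_1,\dots,\Dmat_B]$ with each $\Dmat_b = \mathrm{diag}(\mathrm{Vec}(\Cmat_b))$ diagonal, we have $\Hmat\Hmat\ts = \sum_{b=1}^B \Dmat_b\Dmat_b\ts = \sum_{b=1}^B \Dmat_b^2$, which is again a diagonal matrix whose $i$-th entry equals $\sum_{b=1}^B [\mathrm{Vec}(\Cmat_b)]_i^2$. Because all pixel and mask values are normalized to $[0,1]$, each such entry is at most $B$, hence $\|\Hmat\|_2^2 = \lambda_{\max}(\Hmat\Hmat\ts)\le B$, and consequently $\|\Hmat\ts\Hmat\|_2 = \|\Hmat\|_2^2 \le B$ as well (the two Gram matrices share the same nonzero spectrum). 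Applying this to $\nabla f$ gives $\|\Hmat\ts\Hmat\xv\|_2 \le \|\Hmat\ts\Hmat\|_2\,\|\xv\|_2 \le B\|\xv\|_2$, which is exactly the claimed inequality once the affine term $\Hmat\ts\yv$ is read as part of the imaging model, i.e.\ interpreting $f$ through the quadratic form in $\xv$ whose gradient is $\Hmat\ts\Hmat\xv$. For the full least-squares $f$ one instead gets the slightly weaker $\|\nabla f(\xv)\|_2 \le B\|\xv\|_2 + \|\Hmat\ts\yv\|_2$, and since $\|\Hmat\ts\yv\|_2\le \sqrt{B}\,\|\yv\|_2$ is a finite constant, restricting to the hardware-feasible set $\xv\in[0,1]^{nB}$ then yields $\|\nabla f(\xv)\|_2/\sqrt{nB}\le L$ with $L = B\sqrt{B} + \sqrt{B}\,\|\yv\|_2/\sqrt{nB} < \infty$, which is precisely the bounded-gradient hypothesis needed to invoke the PnP-ADMM fixed-point theorem of Chan \emph{et al.}

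I do not expect a genuine obstacle here: the only subtlety is that the displayed inequality $\|\nabla f(\xv)\|_2 \le B\|\xv\|_2$ is stated a bit informally (it suppresses the data-dependent affine term), so the real ``work'' is deciding whether to present it for the homogeneous quadratic part alone or to carry the additive constant through and phrase the conclusion in the normalized $L$-form used later in Sec.~\ref{Sec:PnP_ADMM}. Everything else reduces to the one-line spectral bound $\|\Hmat\|_2^2\le B$, which follows purely from the diagonal-block structure of $\Hmat$ together with the $[0,1]$ normalization of the masks.
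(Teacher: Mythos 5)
Your proposal is correct, and it reaches the same conclusion by the same overall route (differentiate $f$, isolate the constant term $\Hmat\ts\yv$, and bound the quadratic term $\Hmat\ts\Hmat\xv$ using the diagonal-block structure of $\Hmat$), but the key estimate is obtained differently. The paper writes out the full $nB\times nB$ block matrix $\Hmat\ts\Hmat = [\Dmat_i\Dmat_j]_{i,j}$ and argues somewhat informally that $\Hmat\ts\Hmat\xv$ is a ``weighted sum'' of the entries of $\xv$, yielding $\|\Hmat\ts\Hmat\xv\|_2 \le B C_{\rm max}\|\xv\|_2$; it then appends probabilistic refinements for binary and Gaussian masks. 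You instead pass to the $n\times n$ matrix $\Hmat\Hmat\ts = \sum_{b}\Dmat_b^2$, which is diagonal with entries at most $B$, and use the fact that $\Hmat\ts\Hmat$ and $\Hmat\Hmat\ts$ share the same nonzero spectrum to get $\|\Hmat\ts\Hmat\|_2=\|\Hmat\|_2^2\le B$. This is a cleaner and more rigorous derivation of the same operator-norm bound (and it is exactly the matrix $\Rmat=\Hmat\Hmat\ts$ the paper later exploits in the GAP projection step). You are also right, and more careful than the paper, in flagging that the displayed inequality $\|\nabla f(\xv)\|_2\le B\|\xv\|_2$ literally holds only for the homogeneous part $\Hmat\ts\Hmat\xv$; the full gradient picks up the bounded additive term $\Hmat\ts\yv$, and your normalized $L$-form on $[0,1]^{nB}$ is precisely what Corollary~1 actually needs. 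No gap.
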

\begin{proof}
	The gradient of $f(\xv)$ in SCI is 
\begin{equation}
\nabla f(\xv) = \Hmat\ts\Hmat\xv-\Hmat\ts\yv,
\end{equation} where $\Hmat$ is a concatenation of diagonal matrices of size $n\times nB$ as shown in Eq.~\eqref{Eq:Hmat_strucutre}. 
\begin{list}{\labelitemi}{\leftmargin=12pt \topsep=0pt \parsep=0pt}
	\item The $\Hmat\ts\yv$ is a non-negative constant since both the measurement $\yv$ and the mask are non-negative in nature.
	\item Now let's focus on $\Hmat\ts\Hmat\xv$. Since
	\begin{align}
	\label{eq_sesci_PTP}
	\Hmat\ts\Hmat&=
	\left[
	\begin{matrix}
	\Dmat_1 \\ 
	\vdots \\ 
	\Dmat_B
	\end{matrix}
	\right] \left[
	\begin{matrix}
	\Dmat_1 \cdots \Dmat_B
	\end{matrix}
	\right]\\
	& = \left[
	\begin{matrix}
	\Dmat_1^2& \Dmat_1\Dmat_2 & \cdots & \Dmat_1\Dmat_B\\
	\Dmat_1 \Dmat_2& \Dmat^2_2 & \cdots & \Dmat_2\Dmat_B\\
	\vdots & \vdots & \ddots & \vdots\\
	\Dmat_1 \Dmat_B& \Dmat_2\Dmat_B & \cdots & \Dmat^2_B
	\end{matrix}
	\right],
	\end{align}
\end{list}
due to this special structure, $\Hmat\ts\Hmat\xv$ is the weighted sum of the $\xv$ and $\|\Hmat\ts\Hmat\xv\|_2\leq B C_{\rm max}\|\xv\|_2$, where $C_{\rm max}$ is the maximum value in the sensing matrix. Usually, the sensing matrix is normalized to $[0,1]$ and this leads to $C_{\rm max}=1$ and therefore $\|\Hmat\ts\Hmat\xv\|_2\leq B \|\xv\|_2$.

Thus, $\nabla f(\xv)$ is bounded.
Furthermore, 
\begin{itemize}
	\item If the mask element $D_{i,j}$ is drawn from a binary distribution with entries \{0,1\} with a property of $p_1 \in (0,1)$ being 1, then
	\begin{eqnarray}
	\|\Hmat\ts\Hmat\xv\|_2\leq p_1 B \|\xv\|_2
	\end{eqnarray}
	with {\em a high probability}; usually, $p_1 = 0.5$ and thus $\|\Hmat\ts\Hmat\xv\|_2\leq 0.5 B \|\xv\|_2$.
	\item If the mask element $D_{i,j}$ is drawn from a Gaussian distribution ${\cal N}(0, \sigma^2)$ as in~\cite{Jalali18ISIT,Jalali19TIT_SCI}, though it is not practical to get negative modulation (values of $D_{i,j}$) in hardware, 
	\begin{eqnarray}
	\|\Hmat\ts\Hmat\xv\|_2\leq  B\sigma^2 \|\xv\|_2\stackrel{\sigma = 1}{=} B\|\xv\|_2,
	\end{eqnarray}
	with {\em a high probability}, where the concentration of measure is used.
\end{itemize}
\end{proof}

%

%
	%
Lemma~\ref{Lemma:fx_grad} along with the bounded denoiser in Definition~\ref{Definition1} gives us the following Corollary.
\begin{corollary}
\label{Coro1}
	Consider the sensing model of SCI in \eqref{Eq:ghf}. Given $\{\Hmat,\yv\}$, ${\xv}$ is solved iteratively via PnP-ADMM with bounded denoiser, then $\xv^{(k)}$ and $\thetav^{(k)}$ will converge to a fixed point.
\end{corollary}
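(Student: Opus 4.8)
The plan is to obtain Corollary~\ref{Coro1} directly from the fixed-point convergence theorem for PnP-ADMM of Chan \emph{et al.}~\cite{Chan2017PlugandPlayAF}. That theorem has two standing hypotheses: (i) $f$ restricted to the unit hypercube has bounded gradient, \ie, there is a finite constant $L$, independent of the ambient dimension, with $\|\nabla f(\xv)\|_2/\sqrt{nB}\le L$ for every $\xv\in[0,1]^{nB}$; and (ii) the plugged-in denoiser is a bounded denoiser in the sense of Definition~\ref{Definition1}. Hypothesis (ii) holds by assumption of the corollary, so the real work is to verify (i) for the specific quadratic loss $f(\xv)=\tfrac12\|\yv-\Hmat\xv\|_2^2$ of SCI, after which the theorem can be quoted essentially verbatim.

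First I would establish (i) using Lemma~\ref{Lemma:fx_grad}. Since $\nabla f(\xv)=\Hmat\ts\Hmat\xv-\Hmat\ts\yv$, the triangle inequality together with the bound $\|\Hmat\ts\Hmat\xv\|_2\le B\|\xv\|_2$ proved there gives $\|\nabla f(\xv)\|_2\le B\|\xv\|_2+\|\Hmat\ts\yv\|_2$. On the cube $\xv\in[0,1]^{nB}$ we have $\|\xv\|_2\le\sqrt{nB}$; and because every mask entry and every pixel is normalized to $[0,1]$, each coordinate of $\Hmat\ts\yv$ is at most $B$ up to a bounded noise term, so $\|\Hmat\ts\yv\|_2\le (B+\|\zv\|_\infty)\sqrt{nB}$. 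Dividing through by $\sqrt{nB}$ yields $\|\nabla f(\xv)\|_2/\sqrt{nB}\le 2B+\|\zv\|_\infty=:L<\infty$, a bound that depends only on the (fixed) compression ratio $B$ and the noise level, and not on the spatial size $n$. This is exactly hypothesis (i).

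Next I would note that the scheduling used in our PnP-ADMM for SCI---updating $\rho_{k+1}=\gamma_k\rho_k$ with $\gamma_k\ge1$ and setting $\sigma_k=\sqrt{\lambda/\rho_k}$ for the denoiser---is precisely the regime in which the theorem of~\cite{Chan2017PlugandPlayAF} is stated, so no extra monotonicity condition needs to be re-derived. With (i) and (ii) in hand the theorem applies and produces a triple $(\xv^*,\vv^*,\uv^*)$ with $\|\xv^{(k)}-\xv^*\|_2\to0$, $\|\vv^{(k)}-\vv^*\|_2\to0$, and $\|\uv^{(k)}-\uv^*\|_2\to0$ as $k\to\infty$; identifying $\thetav^{(k)}$ with the auxiliary iterate $\vv^{(k)}$ then gives the stated claim. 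The color SCI case in~\eqref{eq:forward_color} is handled identically: the block-diagonal $\Hmat$ there is still a stacking/masking operator with normalized entries, so the same gradient bound holds with $B$ replaced by the per-channel compression ratio, and the same invocation applies.

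The only genuine content---and therefore the step to get right---is the dimension-independence of $L$ in the second paragraph. Lemma~\ref{Lemma:fx_grad} as stated only bounds $\|\nabla f(\xv)\|_2$ by $B\|\xv\|_2$, which by itself grows like $\sqrt{n}$; the substance is that after normalizing to $[0,1]^{nB}$ the ratio $\|\nabla f(\xv)\|_2/\sqrt{nB}$ remains bounded by a constant depending only on $B$, and that the constant term $\Hmat\ts\yv$ dropped in the lemma's statement does not spoil this. Everything else is bookkeeping plus the citation to~\cite{Chan2017PlugandPlayAF}.
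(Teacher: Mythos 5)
Your proposal is correct and follows essentially the same route as the paper, which simply invokes the fixed-point convergence theorem of Chan \emph{et al.}~\cite{Chan2017PlugandPlayAF} after Lemma~\ref{Lemma:fx_grad} and Definition~\ref{Definition1} supply its two hypotheses (the paper's proof is literally a one-line citation). Your extra step checking that the normalized gradient bound $\|\nabla f(\xv)\|_2/\sqrt{nB}\le L$ is dimension-independent, including the $\Hmat\ts\yv$ term the lemma omits, is a useful filling-in of detail the paper leaves implicit rather than a different argument.
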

\begin{proof}
The proof follows \cite{Chan2017PlugandPlayAF} and thus omitted here.
\end{proof}

\section{Plug-and-Play GAP for SCI \label{Sec:PnP_GAP}}
In this section, following the generalized alternating projection (GAP) algorithm~\cite{Liao14GAP} and the above conditions on PnP-ADMM, we propose the PnP-GAP for SCI, which has a lower computational workload (thus faster) than PnP-ADMM. 

\subsection{Algorithm}
Different from the ADMM in Eq.~\eqref{Eq:ADMM_xv}, GAP solves SCI by the following problem
\begin{equation} \label{Eq:GAP_xv}
({\hat \xv}, {\hat \vv}) = \argmin_{\xv,\vv} \frac{1}{2}\|\xv - \vv\|_2^2 + \lambda g(\vv), ~{\text{s.t.}}~~ \yv = \Hmat\xv.
\end{equation}
Similarly to ADMM, the minimizer in Eq.~\eqref{Eq:GAP_xv} is solved by a sequence of subproblems and we again let $k$ denotes the iteration number.
\begin{list}{\labelitemi}{\leftmargin=10pt \topsep=0pt \parsep=0pt}
	\item Solving $\xv$: given $\vv$, $\xv^{(k+1)}$ is updated via an Euclidean projection of
	$\vv^{(k)}$ on the linear manifold ${\cal M}: \yv = \Hmat \xv$,
	\begin{equation}
	\xv^{(k+1)} =  \vv^{(k)} + \Hmat\ts (\Hmat \Hmat\ts)\inv (\yv - \Hmat \vv^{(k)}). \label{Eq:x_k+1}
	\end{equation}
	Recall~\eqref{Eq:Hmat_strucutre}, $\{\Dmat_i\}_{i=1}^B$ is a diagonal matrix
\begin{equation}
\Dmat_i = {\rm diag} (D_{i,1}, \dots, D_{i,n}). \nonumber
\end{equation}
Thereby, $\Hmat\Hmat\ts$ is diagonal matrix, \ie,
\begin{equation}
{\Rmat = \Hmat\Hmat\ts = {\rm diag}(R_1, \dots, R_n)},\label{eq:R}
\end{equation}
where $ R_{j} = \sum_{b=1}^{B} D^2_{i,j}, \forall j = 1,\dots,n$.
Eq.~\eqref{Eq:x_k+1} can thus be solved efficiently.
	\item Solving $\vv$: given $\xv$, updating $\vv$ can be seen as a denoising problem and
	\begin{equation}
	{  \vv^{(k+1)} = {\cal D}_{\sigma}(\xv^{(k+1)}).} \label{Eq:Denoise_GAP}
	\end{equation}
	Here, various denoiser can be used with $\sigma = \sqrt{\lambda}$.
\end{list}  

\begin{algorithm}[!htbp]
	\caption{Plug-and-Play GAP}
	\begin{algorithmic}[1]
		\REQUIRE$\Hmat$, $\yv$.
		\STATE Initial $\vv^{(0)}$, $\lambda_0$, $\xi<1$.
		\WHILE{Not Converge}
		\STATE Update $\xv$ by Eq.~\eqref{Eq:x_k+1}. 
		\STATE Update $\vv$ by denoiser  $\vv^{(k+1)} = {\cal D}_{\sigma_k}(\xv^{(k+1)})$.
		\IF {$\Delta_{k+1}\ge \eta \Delta_k$}
		\STATE {$\lambda_{k+1} = \xi \lambda_k$}
		\ELSE 
		\STATE {$\lambda_{k+1} =  \lambda_k$}
		\ENDIF
		\ENDWHILE
	\end{algorithmic}
	\label{algo:PP_GAP}
\end{algorithm}

We can see that in each iteration, the only parameter to be tuned is $\lambda$ and we thus set $\lambda_{k+1} = \xi_k \lambda_k$ with $\xi_k\le 1$.
Inspired by the PnP-ADMM, we update $\lambda$ by the following two rules:
\begin{list}{\labelitemi}{\leftmargin=12pt \topsep=0pt \parsep=0pt}
	\item [a)] Monotone update by setting
	$\lambda_{k+1} = \xi \lambda_k$, with $\xi<1$. 
	\item [b)] Adaptive update
  by considering the relative residue:
	\begin{eqnarray}
	{\textstyle  \Delta_{k+1} = \frac{1}{\sqrt{nB}}\left(\|\xv^{(k+1)} - \xv^{(k)}\|_2 + \|\vv^{(k+1)} - \vv^{(k)}\|_2\right)}.\nonumber \label{eq:Delta}
	\end{eqnarray}
	For any $\eta \in [0,1)$ and let $\xi<1$ be a constant, $\lambda_k$ is conditionally updated according  to the following settings:
	\begin{list}{\labelitemi}{\leftmargin=14pt \topsep=0pt \parsep=0pt}
		\item [i)] If $\Delta_{k+1}\ge \eta \Delta_k$, then $\lambda_{k+1} = \xi \lambda_k$.
		\item [ii)] If $\Delta_{k+1}< \eta \Delta_k$, then $\lambda_{k+1} =  \lambda_k$.
	\end{list}
\end{list}
With this adaptive updating of $\lambda_k$, the full PnP-GAP algorithm for SCI is exhibited in Algorithm~\ref{algo:PP_GAP}.

\subsection{Convergence \label{Sec:Conv_gap}}
\begin{assumption}\label{Ass:non_in}
	(Non-increasing denoiser) The denoiser in each iteration of PnP-GAP ${\cal D}_{\sigma_{k}}: {\mathbb R}^{nB} \rightarrow {\mathbb R}^{nB}$ performs denoising in a non-increasing order, \ie, $\sigma_{k+1}\le \sigma_k$. Further, when $k\rightarrow+\infty$, $\sigma_k \rightarrow 0$.
\end{assumption} 
This assumption makes sense since as the algorithm proceeds we expect the algorithm's estimate of the underlying signal to become more accurate, which means that the denoiser needs to deal with a less noisy signal.  
This is also guaranteed by the $\lambda$ setting in Algorithm~\ref{algo:PP_GAP} and imposed by $\rho$ setting in the PnP-ADMM~\cite{Chan2017PlugandPlayAF}. 
With this assumption, we have the following convergence result of PnP-GAP.
\begin{theorem} \label{The:GAP_SCI_bound}
Consider the sensing model of SCI. Given $\{\Hmat,\yv\}$, ${\xv}$ is solved by PnP-GAP with bounded denoiser in a non-increasing order, then $\xv^{(k)}$ converges.
\end{theorem}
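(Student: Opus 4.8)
The plan is to show that the sequence $\{\xv^{(k)}\}$ is Cauchy in $\mathbb{R}^{nB}$, hence convergent. First I would write out the one-step recursion explicitly: combining the projection step \eqref{Eq:x_k+1} and the denoising step \eqref{Eq:Denoise_GAP}, we have
\begin{equation}
\xv^{(k+1)} = \vv^{(k)} + \Hmat\ts\Rmat\inv(\yv - \Hmat\vv^{(k)}), \qquad \vv^{(k)} = {\cal D}_{\sigma_{k-1}}(\xv^{(k)}). \nonumber
\end{equation}
The key structural fact I would exploit is that $\Pmat \triangleq \Hmat\ts\Rmat\inv\Hmat$ is an orthogonal projection (onto the row space of $\Hmat$), so $\Imat - \Pmat$ is also an orthogonal projection and both are nonexpansive. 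Since both $\yv = \Hmat\xv^{\rm true}$ is consistent, one gets the clean identity $\xv^{(k+1)} - \xv^{(k)} = (\Imat - \Pmat)(\vv^{(k)} - \vv^{(k-1)})$ (up to the measurement-consistency term, which cancels), and hence $\|\xv^{(k+1)} - \xv^{(k)}\|_2 \le \|\vv^{(k)} - \vv^{(k-1)}\|_2$.

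Second, I would bound $\|\vv^{(k)} - \vv^{(k-1)}\|_2$ using the bounded denoiser property of Definition~\ref{Definition1}: by the triangle inequality,
\begin{equation}
\|\vv^{(k)} - \vv^{(k-1)}\|_2 \le \|{\cal D}_{\sigma_{k-1}}(\xv^{(k)}) - \xv^{(k)}\|_2 + \|\xv^{(k)} - \xv^{(k-1)}\|_2 + \|\xv^{(k-1)} - {\cal D}_{\sigma_{k-2}}(\xv^{(k-1)})\|_2, \nonumber
\end{equation}
and each denoiser-residual term is at most $\sqrt{nB}\,\sigma_{k-1}\sqrt{C}$ and $\sqrt{nB}\,\sigma_{k-2}\sqrt{C}$ respectively by the bounded denoiser assumption. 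Chaining this with the projection bound from the first step gives a recursion of the form $a_{k+1} \le a_k + c(\sigma_{k-1} + \sigma_{k-2})$ where $a_k = \|\xv^{(k)} - \xv^{(k-1)}\|_2$ — which is not yet enough. The fix is to instead telescope directly: $\|\xv^{(k+1)} - \xv^{(k)}\|_2 \le \|\vv^{(k)} - \vv^{(k-1)}\|_2$, and then bound $\|\vv^{(k)}-\vv^{(k-1)}\|_2$ by inserting $\xv^{(k)}$ and $\xv^{(k-1)}$ and noting $\vv^{(k)}-\xv^{(k)}$ and $\vv^{(k-1)}-\xv^{(k-1)}$ are the denoiser residuals, each of norm $O(\sqrt{nB}\,\sigma_{k-1})$, $O(\sqrt{nB}\,\sigma_{k-2})$. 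Since Assumption~\ref{Ass:non_in} gives $\sigma_k \to 0$ monotonically, and since in Algorithm~\ref{algo:PP_GAP} the $\lambda_k$ (hence $\sigma_k = \sqrt{\lambda_k}$) either stays fixed or is multiplied by $\xi < 1$, with the adaptive rule forcing infinitely many decreases, I would argue $\sum_k \sigma_k < \infty$ (geometric-type decay), which makes the partial sums $\sum_k \|\xv^{(k+1)}-\xv^{(k)}\|_2$ summable and hence $\{\xv^{(k)}\}$ Cauchy.

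I expect the main obstacle to be exactly the point flagged in the footnote: making the summability argument airtight. A crude bound $\|\xv^{(k+1)}-\xv^{(k)}\| \le \|\vv^{(k)}-\vv^{(k-1)}\|$ alone does not decay, so one genuinely needs the denoiser residuals $\|{\cal D}_{\sigma_k}(\xv) - \xv\|$ to go to zero fast enough, which is where $\sigma_k \to 0$ and the explicit $\xi$-geometric schedule must be invoked — and one must be careful that the ``stays fixed'' branch of the adaptive rule does not occur infinitely often without intervening decreases, or else $\sigma_k$ need not be summable. I would therefore either (i) restrict to the monotone update rule (a) where $\sigma_k = \xi^{k/2}\sigma_0$ makes $\sum_k\sigma_k$ a convergent geometric series outright, or (ii) add the mild hypothesis that the decrease is triggered infinitely often, and in the adaptive case observe that between consecutive decreases $\Delta_{k+1} < \eta\Delta_k$ holds, so the residue $\Delta_k$ itself contracts geometrically on those stretches. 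Either way, once $\sum_k\|\xv^{(k+1)}-\xv^{(k)}\|_2 < \infty$ is established, completeness of $\mathbb{R}^{nB}$ finishes the proof that $\xv^{(k)}$ converges.
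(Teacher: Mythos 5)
Your overall skeleton --- nonexpansiveness of the projection $\Imat-\Pmat$ with $\Pmat=\Hmat\ts\Rmat\inv\Hmat$, the bounded-denoiser property, and summability of $\sigma_k$ to make $\{\xv^{(k)}\}$ Cauchy --- is the right one, and your insistence on $\sum_k\sigma_k<\infty$ is actually \emph{more} careful than the paper, whose own proof stops at $\|\xv^{(k+1)}-\xv^{(k)}\|_2\to 0$ and asserts convergence. However, there is a genuine gap in the middle of your argument. Starting from $\xv^{(k+1)}-\xv^{(k)}=(\Imat-\Pmat)(\vv^{(k)}-\vv^{(k-1)})$, you bound $\|\vv^{(k)}-\vv^{(k-1)}\|_2$ by the triangle inequality through $\xv^{(k)}$ and $\xv^{(k-1)}$, which leaves the term $\|\xv^{(k)}-\xv^{(k-1)}\|_2$ on the right-hand side; the resulting recursion $a_{k+1}\le a_k+c(\sigma_{k-1}+\sigma_{k-2})$ shows only that $a_k$ stays bounded, not that it tends to zero, let alone that it is summable. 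You flag this yourself (``not yet enough''), but the ``fix'' you then describe is the same triangle-inequality insertion and still retains the $a_k$ term, so the Cauchy argument does not close.

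The missing observation --- the one the paper's proof rests on --- is that after the projection step $\xv^{(k)}$ is exactly measurement-consistent: $\Hmat\xv^{(k)}=\yv$ for all $k\ge 1$, or equivalently $(\Imat-\Pmat)\Hmat\ts=0$ implies $(\Imat-\Pmat)\vv^{(k-1)}=(\Imat-\Pmat)\xv^{(k)}$. Substituting this into your identity collapses the increment to $\xv^{(k+1)}-\xv^{(k)}=(\Imat-\Pmat)(\vv^{(k)}-\xv^{(k)})$, where $\vv^{(k)}-\xv^{(k)}={\cal D}_{\sigma}(\xv^{(k)})-\xv^{(k)}$ is a \emph{single} denoiser residual, so $\|\xv^{(k+1)}-\xv^{(k)}\|_2\le\sigma\sqrt{nBC}$ by Definition~\ref{Definition1}, with no $a_k$ on the right. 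This is precisely the chain the paper runs in Eqs.~\eqref{eq:mid_step}--\eqref{Eq:convg_C}. From there your own summability discussion (geometric decay of $\sigma_k$ under the monotone $\xi$-rule, or the extra hypothesis you propose for the adaptive rule) is exactly what is needed to upgrade the paper's conclusion $\|\xv^{(k+1)}-\xv^{(k)}\|_2\to 0$ to genuine convergence of $\xv^{(k)}$.
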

\begin{proof}
	From \eqref{Eq:x_k+1}, $\xv^{(k+1)} =  \vv^{(k)} + \Hmat\ts (\Hmat \Hmat\ts)\inv (\yv - \Hmat \vv^{(k)})$, we have
	\begin{equation}
	\xv^{(k+1)}-\xv^{(k)} =  \vv^{(k)}-\xv^{(k)} + \Hmat\ts \Rmat\inv (\yv - \Hmat \vv^{(k)}). 
	\end{equation}
	Following this, 
	\begin{align}
	&\|\xv^{(k+1)} - \xv^{(k)}\|_2^2 \nonumber\\
	=&\|\vv^{(k)} + \Hmat\ts \Rmat\inv (\yv - \Hmat \vv^{(k)}) - \xv^{(k)} \|^2_2 \\
	=& \|\vv^{(k)} + \Hmat\ts \Rmat\inv (\Hmat\xv^{(k)} - \Hmat \vv^{(k)}) - \xv^{(k)} \|^2_2 \nonumber\\
	=& \|(\Imat - \Hmat\ts \Rmat\inv \Hmat) (\vv^{(k)} - \xv^{(k)})\|^2_2  \label{eq:mid_step}\\
	=& \|\vv^{(k)} - \xv^{(k)}\|_2^2 - \|\Rmat^{-\frac{1}{2}}\Hmat (\vv^{(k)} - \xv^{(k)})\|_2^2 \label{Eq:xk_vkminus1}\\
	\le & \|\vv^{(k)} - \xv^{(k)}\|_2^2 \\
	=&  \| {\cal D}_{\sigma_{k}} (\xv^{(k)}) - \xv^{(k)}\|_2^2  \\
	\le&  \sigma_k^2 nBC \label{Eq:convg_C},
	\end{align} 
	where $\Rmat = \Hmat\Hmat\ts$ as defined in \eqref{eq:R} and the following shows the derivation from \eqref{eq:mid_step} to \eqref{Eq:xk_vkminus1}
	\begin{align}
        &\|(\Imat - \Hmat\ts \Rmat\inv \Hmat) (\vv^{(k)} - \xv^{(k)})\|^2_2 \nonumber\\
        &= (\vv^{(k)} - \xv^{(k)})\ts [\Imat - \Hmat\ts (\Hmat \Hmat\ts)\inv \Hmat]\ts \nonumber\\
        &\quad\, \cdot [\Imat - \Hmat\ts (\Hmat \Hmat\ts)\inv \Hmat](\vv^{(k)} - \xv^{(k)}) \nonumber\\
        &= (\vv^{(k)} - \xv^{(k)})\ts [\Imat - 2\Hmat\ts \Rmat\inv \Hmat +\Hmat\ts \Rmat\inv\Rmat \Rmat\inv \Hmat ]\nonumber\\
        &\quad\, \cdot (\vv^{(k)} - \xv^{(k)}) \nonumber\\
        & =  (\vv^{(k)} - \xv^{(k)})\ts (\Imat - \Hmat\ts \Rmat\inv \Hmat )(\vv^{(k)} - \xv^{(k)}) \nonumber\\
        &= \|\vv^{(k)} - \xv^{(k)}\|_2^2 - (\vv^{(k)} - \xv^{(k)})\ts \Hmat\ts \Rmat\inv \Hmat (\vv^{(k)} - \xv^{(k)}) \nonumber\\
        &= \|\vv^{(k)} - \xv^{(k)}\|_2^2- \|\Rmat^{-\frac{1}{2}}\Hmat (\vv^{(k)} - \xv^{(k)})\|_2^2. \nonumber
    \end{align}
	In \eqref{Eq:convg_C} we have used the bounded denoiser.
	Using Assumption \ref{Ass:non_in} (non-increasing denoiser), we have
	$\sigma_k \rightarrow 0$,  $\|\xv^{(k+1)} - \xv^{(k)}\|_2^2 \rightarrow 0$ and thus $\xv^{(k)}$ converges.
\end{proof}

\subsection{PnP-ADMM vs. PnP-GAP}
Comparing PnP-GAP in Eqs~\eqref{Eq:x_k+1} and \eqref{Eq:Denoise_GAP} and PnP-ADMM in Eqs~\eqref{Eq:solvex}-\eqref{Eq:u_k+1}, we can see that PnP-GAP only has two subproblems (rather than three as in PnP-ADMM) and thus the computation is faster. 
It was pointed out in~\cite{Liu18TPAMI} that in the noise-free case, ADMM and GAP perform the same with appropriate parameter settings, which has been mathematically proved. However, in the noisy case, ADMM usually performs better since it considers noise in the model and below we give a geometrical explanation.

\begin{figure}[htbp!]
	\begin{center}
		\includegraphics[width=\linewidth]{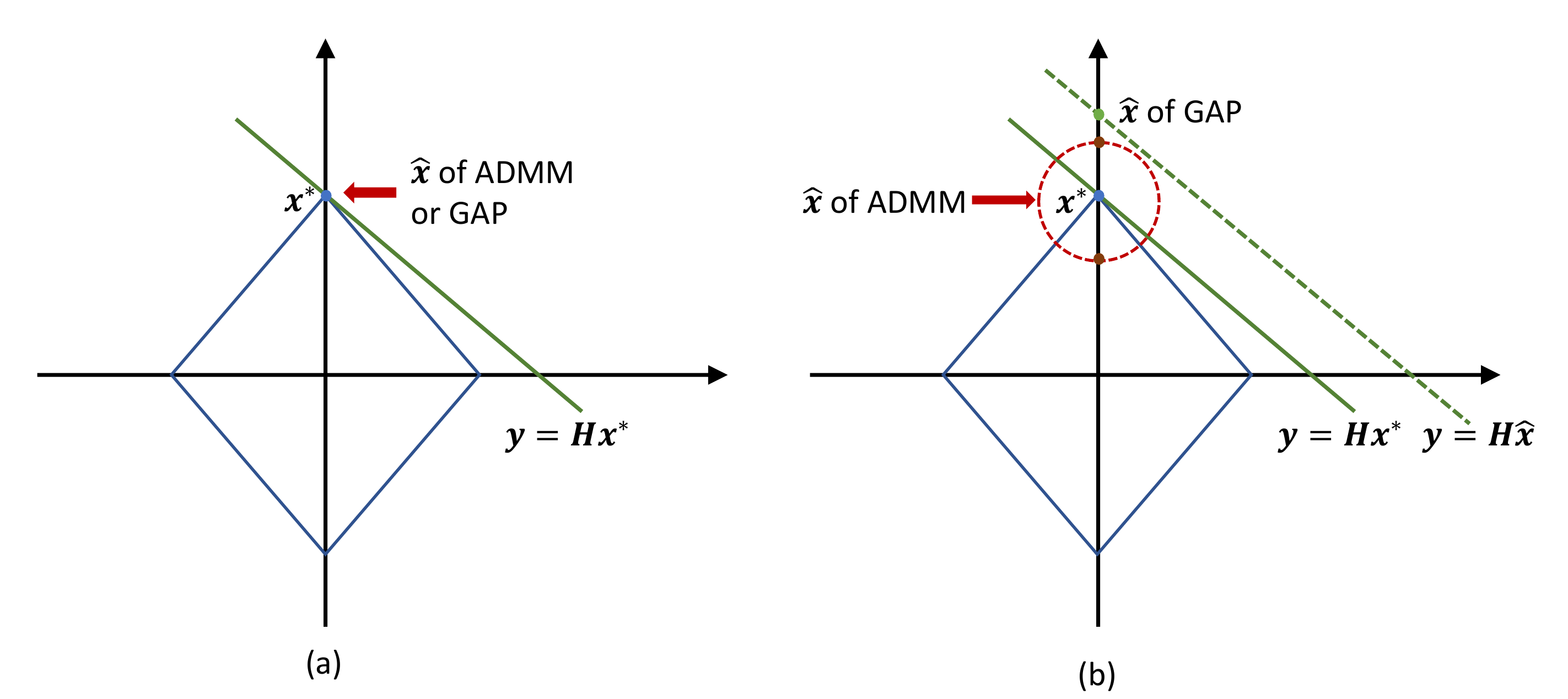}
	\end{center}
	\vspace{-3mm}
	\caption{Demonstration of the solution of ADMM and GAP for a two-dimensional sparse signal, where $\xv^*$ denotes the truth.
	(a) In the noise-free case, both ADMM and GAP have a large chance to converge to the true signal. (b) In the noisy case,  GAP will converge to the green dot (the cross-point of dash-green line and vertical axis), whereas the solution of ADMM will be one of the two red dots that the red circle crosses the vertical axis.} 
	\label{fig:ADMM_GAP}
\end{figure}
 
In Fig.~\ref{fig:ADMM_GAP}, we use a two-dimensional sparse signal (with the $\ell_1$ assumption shown as the diamond shape in blue lines in Fig.~\ref{fig:ADMM_GAP}) as an example to compare ADMM and GAP. Note that the key difference is that, in both noise-free and noisy cases, GAP always imposes the solution $\hat\xv$ on the line of $\yv = \Hmat\hat\xv$ by Eq.~\eqref{Eq:x_k+1}.
In the noise-free case in Fig.~\ref{fig:ADMM_GAP}(a), we can see that since GAP imposes $\hat\xv$ on the green line, it will converge to the true signal $\xv^*$.
ADMM does not have this constraint but minimizes $\|\yv-\Hmat\xv\|_2^2$, the solution might be a little bit off the true signal $\xv^*$. However, with appropriate parameter settings and a good initialization, it also has a large chance to converge to the true signal. 
In the noisy case, GAP sill imposes $\hat\xv$ on the line of $\yv = \Hmat\hat\xv$, shown by the dash-green line in Fig.~\ref{fig:ADMM_GAP}(b). In this case, due to noise, this line might deviate from the solid green line where the true signal lies on. GAP will thus converge to the green-point where the dash-green line crosses the vertical axis. 
On the other hand, by minimizing $\|\yv-\Hmat\xv\|_2^2$, the solution of ADMM can be in the dash-red circle depending on the initialization. 
Considering the sparse constraint, the final solution of ADMM would be one of the two red dots that the red circle crosses the vertical axis.
Therefore, in the noisy case, the Euclidean distance between the GAP solution and the true signal ($\|\hat{\xv} - \xv^*\|_2$) might be larger than that of ADMM. However, the final solution of ADMM depends on the initialization and it is not guaranteed to be more accurate than GAP.

The PnP framework can be recognized as a deep denoising network plus an inverse problem solver. Other solvers such as TwIST~\cite{Bioucas-Dias2007TwIST} and FISTA~\cite{Beck09IST} can also be used~\cite{Zheng20_PRJ_PnP-CASSI} and may also lead to convergence results under proper conditions.
According to our experience, TwIST usually converges slowly and FITSA sometimes sticks to limited performance.
Hence, in the experiments, we use PnP-GAP for simulation data and PnP-ADMM for real data.

\section{Integrate Various Denoisers into PnP for SCI Reconstruction\label{Sec:P3}}
%

In the above derivation, we assume the denoiser existing and in this section, we briefly introduce different denoisers. These denoisers have different speed and quality.

\subsection{Non-deep Denoiser}
In conventional denoising algorithms, a prior is usually employed to impose the piece-wise constant (by TV), sparsity (by bases or learn-able dictionaries) or low-rank (similar patches groups).
These algorithms usually have a clear objective function. In the following, we briefly categorize these algorithms into following classes. For a detailed review, please refer to~\cite{Zha2020TIP_JPG,Zha2020TIP_NSSP}.
\begin{list}{\labelitemi}{\leftmargin=8pt \topsep=2pt \parsep=1pt}
    \item Global constraint based algorithms such as TV~\cite{Rudin92_TV,Stanley05_TV} minimize the total variations of the entire image and have been extended to videos~\cite{yang2013efficient}.
    \item Global sparsity based algorithms impose the coefficients of the image under specific basis such as wavelet~\cite{Crouse98_wavelet} or Curvelet~\cite{Starck02_Curvelet}.
    \item Patch based algorithms usually learn a dictionary using methods such as K-SVD~\cite{Aharon06TSP} for image patches and  then impose sparsity on the coefficients.
    \item Patch-group based algorithms exploit the nonlocal similarity of image patches and impose the sparsity~\cite{Mairal_09ICCV_LSSC} or low-rank~\cite{Gu14CVPR,Gu17IJCV} on these similar patch groups.
\end{list}

Among these denoisers, usually, a faster denoiser~\eg, TV, is very efficient, but cannot provide high-quality results.  
The middle class algorithms~\eg, K-SVD and BM3D~\cite{Dabov07BM3D} can provide decent results with a longer running time.
More advanced denoising algorithm such as WNNM~\cite{Gu14CVPR,Gu17IJCV} can provide better results, but even slower.
On the other hand, while extensive denoising algorithms for images have been developed, VBM4D~\cite{Maggioni2012VideoDD} is still one of the state-of-the-art algorithms for video denoising. 
In our previous work, we have extended WNNM into SCI for gray-scale videos leading to the state-of-the-art algorithm (DeSCI~\cite{Liu18TPAMI}) for SCI, which performs better than PnP-VBM4D as shown in Fig.~\ref{fig:demo} but paying the price of a longer running time.

\subsection{Deep Denoiser}
Another line of emerging denoising approaches is based on deep learning~\cite{XieNIPS2012_deepDN,zhang2017beyond}, which can provide decent results within a short time after training, but they are usually not robust to noise levels and in high noisy cases, the results are not good.
Since this paper is not focusing directly on video denoiser, we do not provide a detailed survey on the deep learning based video denoising. Interested readers can refer to other recent papers.

Different from conventional denoising problems, in SCI reconstruction, the noise level in each iteration is usually from large to small and the dynamic range can from 150 to 1, considering the pixel values within $\{0,1,\dots, 255\}$. 
Therefore, a flexible denoiser that is robust to the input noise level is desired. 
Fortunately, FFDNet~\cite{Zhang18TIP_FFDNet} has provided us a fast and flexible solution under various noise levels. 
However, since FFDNet is developed for images, we perform the denoising step in PnP for SCI frame-wise; for the color SCI problem, we used the gray-scale denoising for each channel in~\cite{Yuan20CVPR}.
As discussed before and shown in Fig.~\ref{fig:Bayer_sci}, using joint demosaicing and reconstruction by employing color denoiser of FFDNet instead of grayscale ones can improve the results significantly. Please refer to Table~\ref{Tab:results_midscale} and Fig.~\ref{fig:comp_frames_midscale} .

Most recently, we have noticed that the FastDVDnet~\cite{Tassano_2020_CVPR} also satisfies these desired (fast and flexible) properties. More importantly, FastDVDnet is developed for video denoising which takes account of the strong temporal correlation within consequent video frames. By using FastDVDnet into PnP, we have achieved even better results on both grayscale and color videos than those of FFDNet. 
{Please refer to Table~\ref{Tab:results_4video} for grayscale video SCI results and Table~\ref{Tab:results_midscale} for the color SCI}.

\subsection{Hybrid Denoiser}
By integrating these denoising algorithms into PnP-GAP/ADMM, we can have different algorithms (Table~\ref{Tab:results_4video} and Fig.~\ref{fig:demo}) with  different results. It is worth noting that DeSCI can be seen as PnP-WNNM, and its best results are achieved by exploiting the correlation across different video frames. 
On the other hand, most existing deep denoising priors are still based on images. Therefore, it is not unexpected that the results of PnP-GAP/ADMM-FFDNet are not as good as DeSCI. 
As mentioned above, by using video denoising priors such as FastDVDnet, the results can be improved.

In addition, these different denoisers can be used in parallel, \ie, one after each other in one GAP/ADMM iteration or used sequentially, \ie, the first $K_1$ iterations using FFDNet and the next $K_2$ iterations using WNNM to achieve better results. This is a good way to balance the performance and running time. Please refer to the performance and running time in Fig.~\ref{fig:demo} and {Table~\ref{Tab:results_4video}}. 
These different denoising priors can also be served as the complementary priors in image/video denoising~\cite{zha2020power,Qiao2020_APLP}. 

\subsection{Joint Demosaicing and SCI Reconstruction \label{Sec:jointcsci}}


For the color SCI described in Eq.~\eqref{eq:forward_color}, though it is easy to directly use the PnP algorithm derived above for grayscale videos by changing the forward matrix, it cannot leads to good results by using the deep denoiser such as FFDNet or FastDVDnet only for denoising according to our experiments.
This might due to the fact that video SCI can be recognized as temporal interpolation while the demosaicing is spatial interpolation. Jointly performing these two tasks is too challenging for the deep denoiser, especially for videos~\cite{GharbiACM16}.
To cope with this challenge, we rewrite the forward model of color-SCI as
\begin{equation}
  \yv = \Hmat \Tmat_{\cal M} \xv  
\end{equation}
where $\Tmat_{\cal M} $ is the mosaicing and deinterleaving process shown in Fig.~\ref{fig:Bayer_sci} that translate the RGB videos $\xv$ to the Bayer pattern video $\xv^{\rm (rggb)}$. 
From $\yv$ to $\xv^{\rm (rggb)}$ is exactly the same as grayscale video SCI. 

Different algorithms exist for images demosaicing, \ie, from $\xv^{\rm (rggb)}$ to $\xv$~\cite{LiDemosaicing08}. Recently, the deep learning based demosaicing algorithms have also been developed~\cite{Brady:20,GharbiACM16}.  
The key principle of PnP algorithms for color SCI is to make full use of the color denoising algorithm, rather than channel-wise grayscale denoising. Bearing this concern in mind, we decompose the denoising step in PnP-GAP \eqref{Eq:Denoise_GAP} into the following 2 steps:
\begin{eqnarray}
    \tilde{\vv}^{k+1} &=& {\cal D}_{\cal M} (\xv^{(k+1)}), \label{Eq:demosaic}\\
    {\vv}^{k+1} &=& {\cal D}_{\sigma}(\tilde{\vv}^{k+1}), \label{Eq:denoise_color}
\end{eqnarray}
where ${\cal D}_{\cal M}$ is the demosaicing algorithm being used\footnote{Demosaicing is conducted after interleaving. Similarly, mosaicing and then deinterleaving are conducted before projection in Eq.~\eqref{Eq:x_k+1}, as shown in Fig.~\ref{fig:Bayer_sci}(b).} and ${\cal D}_{\sigma}$ now denotes the color denoising algorithms. 
Note that ${\cal D}_{\cal M}$ is also a plug-and-play operation where different algorithms can be used. 
In the experiments, we have found that both 'Malvar04'~\cite{malvar2004high-quality} and `Menon07'~\cite{Menon07} can lead to stable results and they are performing better than the fast bilinear interpolation.
However, the time consumption of `Menon07' is about 4$\times$ longer than 'Malvar04' with limited gain for our color SCI problem. Therefore, we used 'Malvar04' in our experiments.
We believe a deep learning based algorithm can leads to better results for demosaicing. However, for our color-SCI reconstruction, we noticed that existing deep learning based demosaicing is not stable, though they do perform well for the sole demosaicing task.
We leave this robust demosaicing deep learning network for the future work.
Nonetheless, our proposed PnP algorithm can adopt any new demosaicing and denoising algorithms to improve the results. 

For the sake of running time of large-scale color SCI reconstruction, instead of calling the  demosaicing algorithms in Eq.~\eqref{Eq:demosaic}, we have also developed a light-weight method by using the R channel, B channel and the averaged G channels to construct a small-scale RGB video (with half size of rows and columns of the desired video) in each iteration; following this, the color denoising operation in~\eqref{Eq:denoise_color} is performed on this small scale RGB video and the iterations are conducted until converge. 
In this case, we only need to call the demosaicing algorithm once at the end to provide the final result.
Apparently, the results would not be as good as using demosaicing in each iteration, but it saves time.  


\subsection{Online PnP for Sequential Measurements}
To speed up the convergence of PnP with deep denoiser, we notice that a good initialization would help.
Usually, we use a few iterations of GAP-TV to warmly start the PnP-FastDVDnet and it will lead to good results with about 50 iterations.
However, for the sequential measurements, we empirically find that using the reconstruction results of the previous measurement to initialize the next measurement will lead to good results. 
In this case, when we are handling multiple sequential measurements, we only need a warm start for the first measurement and the next ones can use the results of previous measurements. 
This will slightly save the reconstruction time.
This idea shares the similar spirit of `group of pictures' in the MPEG video compression.

\begin{table*}[!htbp]
	\caption{Grayscale benchmark dataset: The average results of PSNR in dB (left entry in each cell) and SSIM (right entry in each cell) and run time per measurement/shot in minutes by different algorithms on 6 benchmark datasets.}
	\centering
	 \vspace{-3mm}
	\resizebox{\textwidth}{!}
	{
	\begin{threeparttable}
	\begin{tabular}{cV{2}ccccccV{2}cc}
			\hlineB{3}
			Algorithm& \texttt{Kobe} & \texttt{Traffic} & \texttt{Runner} & \texttt{Drop} & \texttt{Crash} & \texttt{Aerial} & Average &  Run time (min) \\
			\hlineB{3}
			GAP-TV~\cite{Yuan16ICIP_GAP}   & 26.46, 0.8448 & 20.89, 0.7148 & 28.52, 0.9092 & 34.63, 0.9704 & 24.82, 0.8383 & 25.05, 0.8281 & 26.73, 0.8509 & 0.07 \\
			{DeSCI~\cite{Liu18TPAMI}} & {\bf 33.25}, {0.9518} & {\bf 28.71}, {0.9250} & {\bf 38.48}, {\bf 0.9693} & 43.10, 0.9925 & {27.04}, {0.9094} & 25.33, 0.8603 & {\bf 32.65}, {0.9347} & 103.0 \\
			\hlineB{2}
			PnP-VBM4D        & 30.60, 0.9260 & 26.60, 0.8958 & 30.10, 0.9271 & 26.58, 0.8777 & 25.30, 0.8502 & 26.89, 0.8521 & 27.68, 0.8882 & 7.9  \\
			
			PnP-FFDNet~\cite{Yuan20CVPR}       & 30.50, 0.9256 & 24.18, 0.8279 & 32.15, 0.9332 & 40.70, 0.9892 & 25.42, 0.8493 & 25.27, 0.8291 & 29.70, 0.8924 & {0.05 (GPU)} \\
			
			PnP-WNNM-TV      & 33.00, 0.9520 & 26.76, 0.9035 & 38.00, 0.9690 & 43.27, 0.9927 & 26.25, 0.8972 & 25.53, 0.8595 & 32.14, 0.9290 & 40.8 \\
			PnP-WNNM-VBM4D   & 33.08, \bf{0.9537} & 28.05, 0.9191 & 33.73, 0.9632 & 28.82, 0.9289 & 26.56, 0.8874 & {27.74}, {0.8852} & 29.66, 0.9229 & 25.0 \\
			PnP-WNNM-FFDNet & 32.54, 0.9511 & 26.00, 0.8861 & 36.31, 0.9664 & \bf{43.45}, \bf{0.9930} & 26.21, 0.8930 & 25.83, 0.8618 & 31.72, 0.9252 & 17.9 \\
			\hlineB{2}
			GAP-TV*          & 26.92, 0.8378 & 20.66, 0.6905 & 29.81, 0.8949 & 34.95, 0.9664 & 24.48, 0.7988 & 24.81, 0.8105 & 26.94, 0.8332 & {\bf 0.03} \\
			PnP-FFDNet*      & 30.33, 0.9252 & 24.01, 0.8353 & 32.44, 0.9313 & 39.68, 0.9864 & 24.67, 0.8330 & 24.29, 0.8198 & 29.21, 0.8876 & {{\bf 0.03} (GPU)} \\
			\rowcolor{lightgray}
			PnP-FastDVDnet*  & 32.73, 0.9466 & 27.95, {\bf 0.9321} & 36.29, 0.9619 & 41.82, 0.9892 & {\bf 27.32}, {\bf 0.9253} & {\bf 27.98}, {\bf 0.8966} & 32.35, {\bf 0.9420} & {0.10 (GPU)} \\
			\hlineB{3}
	\end{tabular}
	\begin{tablenotes}
	  \item[*] Implemented with Python (PyTorch for FFDNet and FastDVDnet), where the rest are implemented with MATLAB (MatConvNet for FFDNet).
	  \end{tablenotes}
	  \end{threeparttable}
	}
	\label{Tab:results_4video}
\end{table*}

\section{Simulation Results \label{Sec:results}}
We apply the proposed PnP algorithms to both simulation~\cite{Liu18TPAMI,Ma19ICCV} and real datasets captured by the SCI cameras~\cite{Patrick13OE,Yuan14CVPR,Qiao2020_APLP}.
In addition to the widely used grayscale benchmark datasets~\cite{Yuan20CVPR}, we also build a mid-scale color dataset consisting of 6 color videos (details in Sec.~\ref{sec:sim_color}) and we hope they will serve as the benchmark data of color SCI problems. 
This benchmark dataset is used to verify the performance of our proposed PnP-GAP for joint reconstruction and demosaicing compared with other algorithms. 
Finally, we apply the proposed joint method to the large-scale datasets introduced in~\cite{Yuan20CVPR} and show better results using FastDVDnet for color video denoising.

Conventional denoising algorithms include TV~\cite{Yuan16ICIP_GAP}, VBM4D~\cite{Maggioni2012VideoDD} and WNNM~\cite{Gu14CVPR} are used for comparison. For the deep learning based denoiser, we have tried various algorithms and found that FFDNet~\cite{Zhang18TIP_FFDNet} provides the best results among image denoising methods while FastDVDnet~\cite{Tassano_2020_CVPR}  provides the best results among video denoising approaches.
Both PSNR and SSIM~\cite{Wang04imagequality} are employed as metrics to compare different algorithms. 
Note that in our preliminary paper~\cite{Yuan20CVPR}, all the codes are conducted in MATLAB, while in this work, we re-write the code of GAP-TV, PnP-FFDNet using Python, to be consistent with PnP-FastDVDnet. However, we do notice a difference of FFDNet; the performance of Python version\footnote{Code from \url{https://github.com/cszn/KAIR}.} is a little bit worse (0.49dB lower in PSNR for the grayscale benchmark data) than the counterpart conducted in MATLAB\footnote{Code from \url{https://github.com/cszn/FFDNet}.}. 
We also notice that GAP-TV in Python is more than $2\times$ faster than MATLAB with a slightly better result. We show these in Table~\ref{Tab:results_4video}.

\begin{figure}[htbp!]
	\begin{center}
		\includegraphics[width=1.0\linewidth]{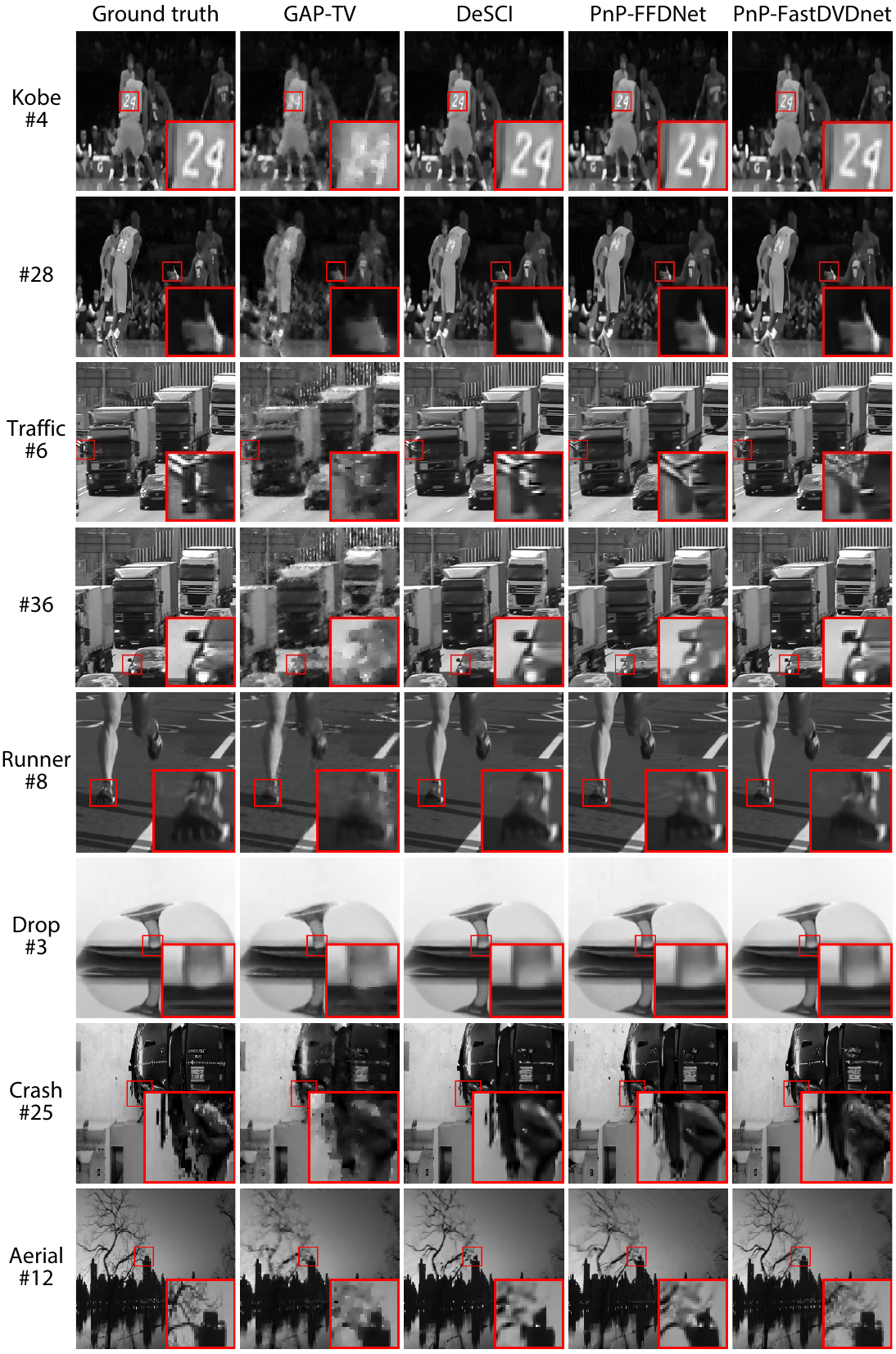}
	\end{center}
	\vspace{-4mm}
	\caption{Comparison of reconstructed frames of different PnP-GAP algorithms (GAP-TV~\cite{Yuan16ICIP_GAP}, DeSCI~\cite{Liu18TPAMI}, PnP-FFDNet~\cite{Yuan20CVPR}, and PnP-FastDVDnet) on six simulated grayscale video SCI datasets of spatial size $256\times256$ and $B=8$.}
	\label{fig:comp_frames_full}
	\vspace{-2mm}
\end{figure}

\subsection{Benchmark Data: Grayscale Videos \label{sec:sim_gray}}
We follow the simulation setup in~\cite{Liu18TPAMI} of six datasets, \ie, \texttt{Kobe, Traffic, Runner, Drop, crash,} and \texttt{aerial}~\cite{Ma19ICCV}\footnote{The results of DeSCI (GAP-WNNM) is different from those reported in \cite{Ma19ICCV} because of parameter settings of DeSCI, specifically the input estimated noise levels for each iteration stage. We use exactly the same parameters as the DeSCI paper~\cite{Liu18TPAMI}, which is publicly available at \href{https://github.com/liuyang12/DeSCI}{https://github.com/liuyang12/DeSCI}.}, where $B=8$ video frames are compressed into a single measurement.
Table~\ref{Tab:results_4video} summarizes the PSNR and SSIM results of these 6 benchmark data using various denoising algorithms, where DeSCI can be categorized as GAP-WNNM, and PnP-WNNM-FFDNet used 50 iterations FFDNet and then 60 iterations WNNM, similar for PnP-WNNM-VBM4D.
{PnP-FastDVDnet used 60 iterations and we used 5 neighbouring frames for video denoising.}
It can be observed that:
\begin{list}{\labelitemi}{\leftmargin=8pt \topsep=2pt \parsep=1pt}
	\item [$i$)] By using GPU, PnP-FFDNet is now the fastest algorithm\footnote{{Only a regular GPU is needed to run FFDNet and since FFDNet is performed in a frame-wise manner, we do not need a large amount of CPU or GPU RAM (no more than 2GB here) compared to other video denoisers using parallelization (even with parallelization, other algorithms listed here are unlikely to outperform PnP-FFDNet in terms of speed).}}; it is very close to GAP-TV, meanwhile providing more than 2dB higher PSNR than GAP-TV. Therefore, PnP-FFDNet can be used as {\em an efficient baseline} in SCI reconstruction. Since the average PSNR is close to 30dB, it is applicable in real cases. This will be further verified in the following subsections on mid-scale and large-scale color datasets.
	\item [$ii$)]
	DeSCI still provides the best results on average PSNR; however, by combing other algorithms with WNNM, comparable results (\eg PnP-WNNM-FFDNet) can be achieved by only using $1/6$ computational time.
	\item [$iii$)] {PnP-FastDVDnet provides the best results on average SSIM and regarding PSNR, it is only 0.3dB lower than DeSCI but $1000\times$ faster. PnP-FastDVDnet is the only algorithm that can provide a higher SSIM than DeSCI.}
	\item [$iv$)] Comparing PnP-FFDnet with PnP-FastDVDnet, we observe that utilizing the temporal correlation has improved the results significantly, \ie, more than 3dB in PSNR and 0.05 in SSIM. 
\end{list}

Fig.~\ref{fig:comp_frames_full} plots selected frames of the six datasets using different algorithms. It can be seen that though DeSCI still leads to highest PSNR, the difference between PnP-FastDVDNet and DeSCI is very small and in most cases, they are close to each other. 
By investigating the temporal correlation, PnP-FastDVDNet can provide finer details than PnP-FFDNet and sometimes DeSCI; please refer to the zoomed parts of \texttt{Aerial} in Fig.~\ref{fig:comp_frames_full}.

\begin{table*}[!htbp]
	\caption{Mid-scale Bayer benchmark dataset: the average results of PSNR in dB (left entry in each cell) and SSIM (right entry) and running time per measurement/shot in minutes by different algorithms on 6 benchmark color Bayer datasets. All codes are implemented in Python (Pytorch for deep denoising) except DeSCI, which is using MATLAB.} 
	\centering
	 \vspace{-4mm}
	\resizebox{.955\textwidth}{!}
	{
	\begin{tabular}{cV{2}ccccccV{2}cc}
			\hlineB{3}
			Algorithm& \texttt{Beauty} & \texttt{Bosphorus} & \texttt{Jockey} & \texttt{Runner}  & \texttt{ShakeNDry}& \texttt{Traffic} & Average &  Run time (mins) \\
			\hlineB{3}
			GAP-TV           & 33.08, 0.9639 & 29.70, 0.9144 & 29.48, 0.8874 & 29.10, 0.8780 & 29.59, 0.8928 & 19.84, 0.6448   &28.46, 0.8635   &   0.3\\
			{DeSCI (GAP-WNNM)} & {34.66}, {0.9711} & 32.88, 0.9518 & 34.14, 0.9382 & 36.16, 0.9489 & 30.94, 0.9049 &24.62, 0.8387   & 32.23, 0.9256  &  1544  \\
			\hline
			PnP-FFDNet-gray       & 33.21, 0.9629 & 28.43, 0.9046 & 32.30, 0.9182 & 30.83, 0.8875 & 27.87, 0.8606 & 21.03, 0.7113  & 28.93, 0.8742  & { 0.22 (GPU)} \\
			PnP-FFDNet-color      & 34.15, 0.9670 & 33.06, 0.9569 & 34.80, 0.9432 & 35.32, 0.9398 & 32.37, 0.9401  &24.55, 0.8370  & 32.38, 0.9307  &   1.63 (GPU)\\
 			\hline
			PnP-FastDVDnet-gray  & 33.01, 0.9628 & 30.95, 0.9342 & 33.51, 0.9279 & 32.82, 0.9004 & 29.92, 0.8920 & 22.81, 0.7764  &  30.50, 0.8989 & { 0.33 (GPU)} \\
			\rowcolor{lightgray}
			PnP-FastDVDnet-color  & {\bf 35.27}, {\bf 0.9719} & {\bf 37.24}, {\bf 0.9781} & {\bf 35.63}, {\bf 0.9495} & {\bf 38.22}, {\bf 0.9648} & {\bf 33.71}, {\bf 0.9685} & {\bf 27.49}, {\bf 0.9147}  & {\bf 34.60}, {\bf 0.9546}  & {1.65 (GPU)} \\
			\hlineB{3}
	\end{tabular}
	}
	\label{Tab:results_midscale}
\end{table*}

\begin{figure*}[htbp!]
	\begin{center}
		\includegraphics[width=.955\linewidth]{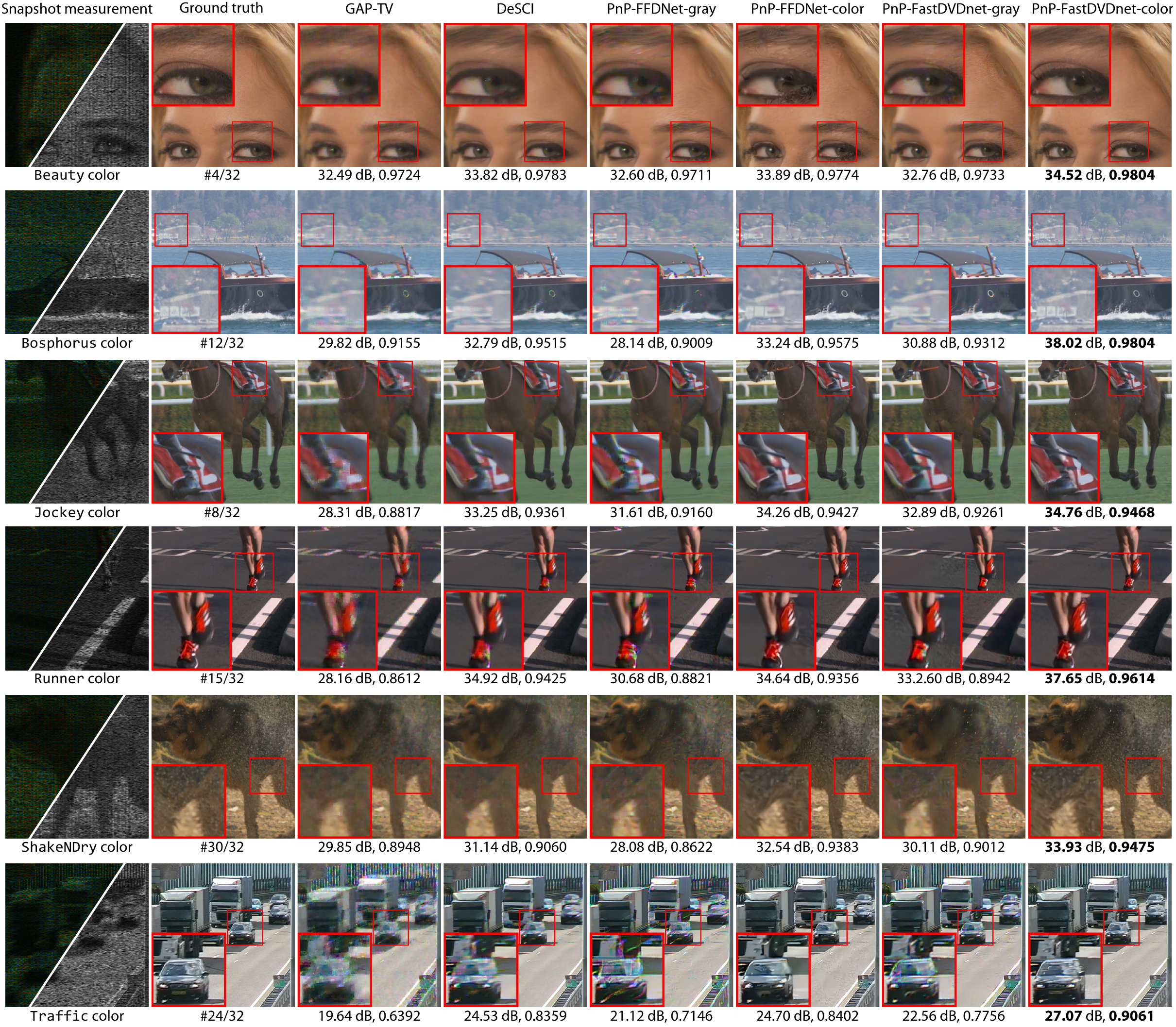}
	\end{center}
	\vspace{-4mm}
	\caption{Comparison of reconstructed frames of PnP-GAP algorithms (GAP-TV~\cite{Yuan16ICIP_GAP}, DeSCI~\cite{Liu18TPAMI}, PnP-FFDNet~\cite{Yuan20CVPR}, and PnP-FastDVDnet) on six simulated benchmark color video SCI datasets of size $512\times512\times3$ and $B=8$. 
	Please refer to the full videos in the supplementary material. 
	}
	\label{fig:comp_frames_midscale}
	\vspace{-3mm}
\end{figure*}

\begin{figure*}[htbp!]
	\centering
		\includegraphics[width=1.0\linewidth]{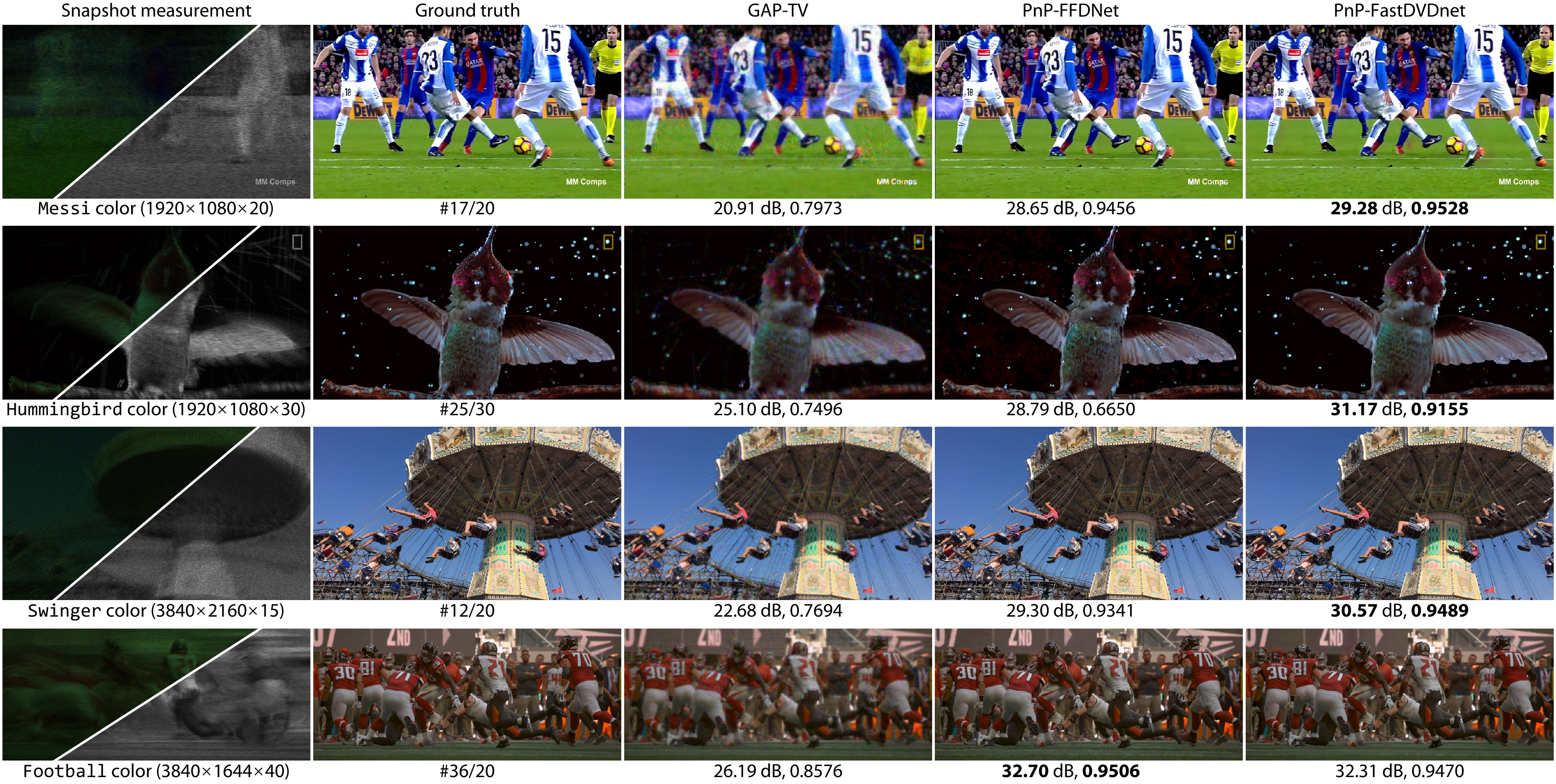}
	\vspace{-6mm}
	\caption{Reconstructed frames of PnP-GAP algorithms (GAP-TV~\cite{Yuan16ICIP_GAP},  PnP-FFDNet~\cite{Yuan20CVPR}, and PnP-FastDVDnet) on four simulated large-scale video SCI datasets. 
	Please refer to the full videos in the supplementary material.
	}
	\vspace{-3mm}
	\label{fig:comp_largescale}
\end{figure*}

\subsection{Benchmark Data: Color RGB-Bayer Videos \label{sec:sim_color}}
As mentioned before, in this paper, we propose the joint reconstruction and demosaicing using the PnP framework for color SCI shown in the lower part of Fig.~\ref{fig:Bayer_sci}. To verify the performance, we hereby generate a color RGB video dataset with 6 scenes of spatial size $512\times512\times3$, and here $3$ denotes the RGB channels. Similar to the grayscale case, we used compression rate $B=8$.
The schematic of a color video SCI system is shown in Fig.~\ref{fig:video_color_sci}. Every 8 consequent video frames are first interleaved to the mosaic frames of size $512\times 512 \times 8$; then these mosaic frames are modulated by shifting binary masks of size $512\times 512 \times 8$ and finally summed to get the compressed mosaic measurement of size $512\times 512$.  
For each dataset, we have 4 compressed measurements and thus in total 32 RGB video frames.
As shown in Fig.~\ref{fig:comp_frames_midscale}, these datasets include \texttt{Beauty}, \texttt{Bosphorus}, \texttt{Jockey}, \texttt{ShakeNDry}\footnote{\texttt{Beauty}, \texttt{Bosphorus}, \texttt{Jockey}, \texttt{ShakeNDry} are downloaded from \href{http://ultravideo.cs.tut.fi/\#testsequences}{http://ultravideo.cs.tut.fi/\#testsequences}.}, \texttt{Runner}\footnote{{Downloaded from \href{https://www.videvo.net/video/elite-runner-slow-motion/4541}{https://www.videvo.net/video/elite-runner-slow-motion/4541}.}}
and \texttt{Traffic}\footnote{Downloaded from \href{http://dyntex.univ-lr.fr/database.html}{http://dyntex.univ-lr.fr/database.html}.}.
In order to keep the video quality, we crop (instead of resize) the video frames with a spatial size of $512\times 512$. We dub these datasets the `mid-scale color data' due to the size in-between the small size grayscale benchmark data and the large-scale data discussed in the next subsection. 

For other algorithms, we perform the reconstruction and demosaicing separately.
The R, G1, G2, and B channels are reconstructed separately and then we employ the `\texttt{demosaic}' function in MATLAB to get the final RGB video.
To verify the performance of this joint procedure, we compare the PnP-FFDnet/FastDVDnet using joint processing (color denoising) and separately reconstruction (grayscale denoising) shown in Fig.~\ref{fig:Bayer_sci}.  

{Table~\ref{Tab:results_midscale} summarizes the PSNR and SSIM results of these datasets. 
We have the following observations.}
\begin{list}{\labelitemi}{\leftmargin=8pt \topsep=2pt \parsep=1pt}
    \item [$i)$] When using color denoising for joint reconstruction and demosaicing, both PnP-FFDNet-color and PnP-FastDVDnet-color outperform DeSCI.
    \item [$ii)$] Color denoising significantly improves the results over grayscale denoising, \ie, for FFDNet, the improvement is 3.45dB and for FastDVDnet, it is even 4.1dB in PSNR.
    \item [$iii)$] Regarding the running time, both PnP-FastDVDnet and PnP-FFDNet need about 1.6 minutes per measurement, while they only need about 0.3 minutes for their grayscale counterparts. Therefore, most time was consumed by demosaicing. As mentioned in Sec.~\ref{Sec:jointcsci}, we hope deep learning based demosaicing will provide a fast and better result in the future.
\end{list}

\begin{table*}[htbp!]
\caption{Running time (minutes) of large-scale data using different algorithms.} 
\vspace{-3mm}
\resizebox{1\linewidth}{!} {
\begin{tabular}{ccV{3}ccccc}
\hlineB{3}
Large-scale dataset & Pixel resolution         & GAP-TV & PnP-FFDNet-gray & PnP-FFDNet-color & PnP-FastDVDnet-gray & PnP-FastDVDnet-color \\
\hlineB{3}
{\tt Messi} color      & $1920\times1080\times3\times20$ & 15.1   & 5.2             & 42.2             & 7.6                 & 43.1                 \\
{\tt Hummingbird} color & $1920\times1080\times3\times30$ & 20.3   & 6.6             & 61.2             & 10.6                & 54.0                 \\
{\tt Swinger} color    & $3840\times2160\times3\times15$ & 39.2   & 13.2            & 138.8            & 21.3                & 138.4                \\
{\tt Football} color   & $3840\times1644\times3\times40$ & 83.0   & 30.6            & 308.8            & 50.7                & 298.1                \\
\hlineB{3}
\end{tabular}
}
\label{Table:time_largescale}
\end{table*}

{Figure~\ref{fig:comp_frames_midscale} plots selected  reconstruction frames of different algorithms for these 6 RGB Bayer datasets with the snapshot measurement shown on the far left. 
Note that, due to the Bayer pattern of the sensor, the captured measurement is actually grayscale shown on the lower-right, whereas due to the coding in the imaging system, the demosaiced measurement depicts the wrong color shown in the upper-left part of the measurement. 
It can be seen from Fig.~\ref{fig:comp_frames_midscale} that PnP-FastDVDnet-color and PnP-FFDNet-color are providing smooth motions and fine spatial details. Some color mismatch exists in the GAP-TV, DeSCI and PnP-FFDNet-gray and PnP-FastDVDnet-gray. For instance, in the {\texttt{Traffic}} data, the color of the cars is incorrectly reconstructed for these methods; similar case exists in the water drops in the {\texttt{ShakeNDry}}. Overall, GAP-TV provides blurry results and DeSCI sometimes over-smooths the background such as the lawn in the {\texttt{Jockey}} data. 
PnP-FastDVDnet-color provides the finest details in the complicated background such as the trees in \texttt{Bosphorus}. }

\subsection{Large-scale Data}
Similar to the benchmark data, we simulate the color video SCI measurements for large-scale data with four YouTube slow-motion videos, \ie, \texttt{Messi}\footnote{\href{https://www.youtube.com/watch?v=sbPrevs6Pd4}{https://www.youtube.com/watch?v=sbPrevs6Pd4}}, \texttt{Hummingbird}\footnote{\href{https://www.youtube.com/watch?v=RtUQ_pz5wlo}{https://www.youtube.com/watch?v=RtUQ\_pz5wlo}}, \texttt{Swinger}\footnote{\href{https://www.youtube.com/watch?v=cfnbyX9G5Rk}{https://www.youtube.com/watch?v=cfnbyX9G5Rk}}, and \texttt{Football}\footnote{\href{https://www.youtube.com/watch?v=EGAuWZYe2No}{https://www.youtube.com/watch?v=EGAuWZYe2No}}. 
%
%
A sequence of color scene is coded by the corresponding shifted random binary masks at each time step and finally summed up to form a snapshot measurement on the color Bayer RGB sensor (with a ``RGGB'' Bayer color filter array)\footnote{
Note these results are different from the ones reported in~\cite{Yuan20CVPR}. The reason is that the measurements are generally in different ways. In~\cite{Yuan20CVPR}, we up-sampled the raw video by putting each color channel as the mosaic R, G1, G2, and B channels. 
This leads to two identical G channels and the reconstructed and the size of demosaiced image is doubled (both in width and height). For example, for UHD color video \texttt{Football} with original image size of $3840\times1644$, the reconstructed video frames have the size of $7680\times3288$ (demosaiced). This is different from the Bayer pattern model described in Sec.~\ref{Sec:jointcsci}. After some researching on the camera design, in this paper, we follow the Bayer RGGB pattern color video SCI model developed in Sec.~\ref{Sec:SCImodel} to generate the new measurements being used in the experiments. We are convinced that this is more appropriate and closer to real color cameras.
}.
To verify the flexibility of the proposed PnP algorithm, we consider different spatial size and different compression rate $B$.
\begin{list}{\labelitemi}{\leftmargin=8pt \topsep=2pt \parsep=2pt}
	\item \texttt{Messi20} color: A $1920\times1080\times3\times20$ video reconstructed from a snapshot.
	\item \texttt{Hummingbird30} color: A $1920\times1080\times3\times30$ video reconstructed from a snapshot.
	\item \texttt{swinger15} color: A $3840\times2160\times3\times15$ video reconstructed from a snapshot.
	\item \texttt{football40} color: A $3840\times1644\times3\times40$ video reconstructed from a snapshot.
\end{list}

Due to the extremely long running time of other algorithms, we hereby only show the results of GAP-TV, PnP-FFDNet and PnP-FastDVDnet; both FFDNet and FastDVDnet used color denoising as in the mid-scale benchmark data. Note that only grayscale FFDNet denoising was used in \cite{Yuan20CVPR}.

{Figure~\ref{fig:comp_largescale} plots selected reconstruction frames of these three algorithms, where we can see that $i$) due to many fine details, GAP-TV cannot provide high quality results, $ii$) both PnP-FFDNet and PnP-FastDVDnet lead to significant improvements over GAP-TV (at least 3.69 dB in PSNR), and $iii$) PnP-FastDVDnet leads to best results on the first three datasets and for the last one, {\texttt{football40}}, it is 0.39dB lower than PnP-FFDNet. This might due to the crowed players in the scene, which is in favor of FFDNet denoising.}
Importantly, for all these large-scale dataset, with a compression rate varying from 15 to 40, we can all get the reconstruction up to (or at least close to) 30dB. This proves that the video SCI can be used in our daily life videos.

Regarding the running time, as shown in Table~\ref{Table:time_largescale}, 
for all these large-scale datasets, PnP with grayscale denoising (PnP-FFDNet-gray and PnP-FastDVDnet-gray) can finish the reconstruction within one hour. However, when the color denoising algorithms are used, the running time is $10\times$ longer. Again, as mentioned in the simulation, most of the time is consumed by the demosaicing algorithms and we expect a robust deep demosaicing network can speed up the reconstruction. Due to this, the running time of  PnP-FastDVDnet-color is very similar to PnP-FFDNet-color. 
Even this, for the HD ($1920\times1080\times3$) video data with $B$ up to 30, the reconstruction can be finished within 1 hour, but the other algorithms such as DeSCI are not feasible as it will take days. For the UHD ($3840\times1644\times3$) videos, even at $B=40$, the reconstruction can be finished in hours. Note that since spatial pixels in video SCI are decoupled, we can also use multiple CPUs or GPUs performing on blocks rather than the entire frame to speed up the reconstruction. 

\begin{figure}[!htbp!]
	\begin{center}
		\includegraphics[width=0.8\linewidth]{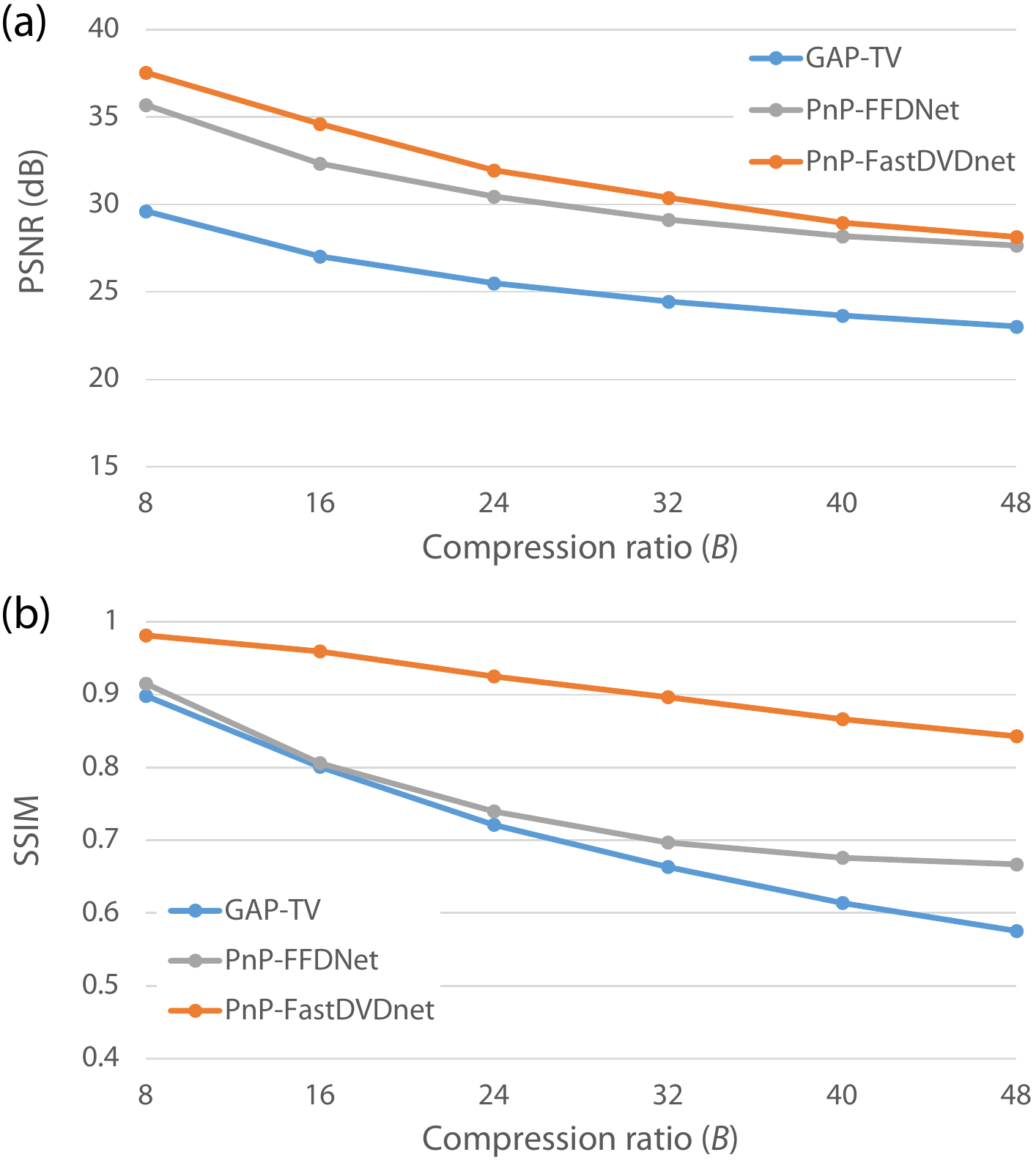}
	\end{center}
	\vspace{-5mm}
	\caption{Reconstruction quality (PSNR in dB (a) and SSIM (b), higher is better) varying compression rates $B$ from 8 to 48 of the proposed PnP methods (PnP-FFDNet and PnP-FastDVDnet) and GAP-TV~\cite{Yuan16ICIP_GAP} .}
	\label{fig:quality_vary_codenum}
\end{figure} %

In addition to these large-scale data with different spatial sizes and compression rates, another way to construct large-scale data for a specific SCI system is to fix the spatial size, but with various compression rates. In this case, the data scales with $B$, which is also challenging to other algorithms including deep learning ones\footnote{In~\cite{Qiao2020_APLP}, it is failed to train a deep neural networks for $B>30$ even with a spatial size $512\times512$ due to the limited GPU memory.}. 
Hereby, we conduct simulation of the \texttt{Hummingbird} data with different compression rates $B = {8,16,24,32, 40,48}$ with results shown in Fig.~\ref{fig:quality_vary_codenum}. It can be seen that even at $B$=48, both PnP-FFDNet and PnP-FastDVDnet can reconstruct the video at PSNR close to 27dB; regarding SSIM, PnP-FastDVDnet achieves 0.85 at $B$=48, which is $>$0.15 higher than PnP-FFDNet and $>$0.25 higher than GAP-TV. Therefore, our proposed PnP algorithms are robust and flexible to different compression rates. This will be further verified by the real data in Sec.~\ref{Sec:gray_hand}.

%

%

\section{Real Data \label{Sec:realdata}}


We now apply the proposed PnP framework to real data captured by SCI cameras to verify the robustness of the proposed algorithms. 
Different data captured by different video SCI cameras are used and these data are of different spatial size, different compression rate and using different modulation patterns. 
We first verify PnP by using grayscale data~\cite{Patrick13OE,Sun17OE} with a fixed $B$; then we conduct the experiments of grayscale data with different compression rates captured by the same system~\cite{Qiao2020_CACTI}. Lastly, we show the results of color data captured by the SCI system in~\cite{Yuan14CVPR}.
Note that, in Sec.~\ref{Sec:gray_hand}, for the first time, we show that a $512\times512\times50$ {\texttt{Hand}} video reconstructed from a snapshot in high quality with each frame having a motion.

\begin{figure}[!htbp!]
	\begin{center}
		\includegraphics[width=1\linewidth]{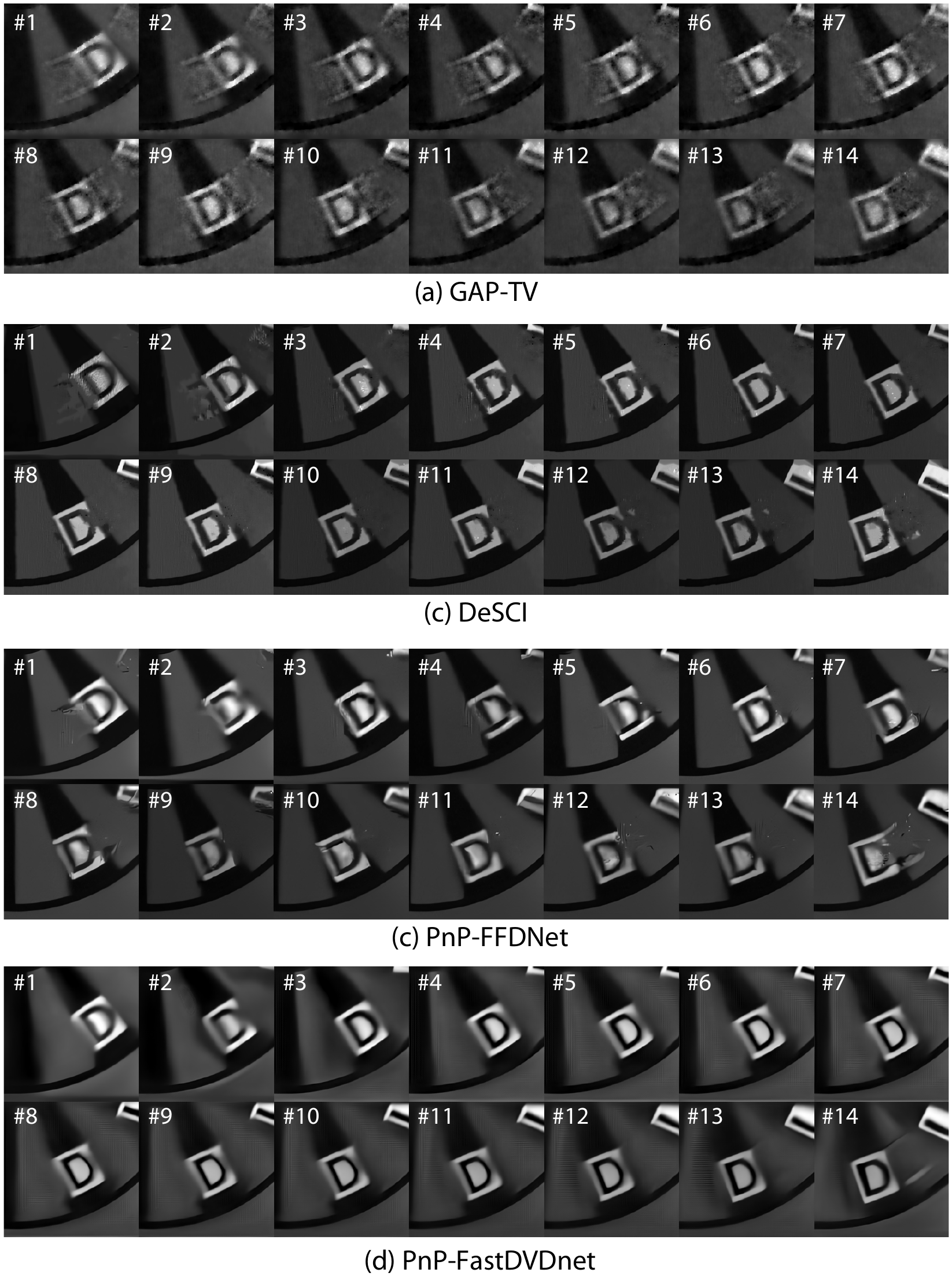}\\
	\end{center}
	\vspace{-5mm}
	\caption{Real data: \texttt{chopper wheel} ($256\times256\times14$). }
	\vspace{-4mm}
	\label{fig:real_chopperwheel}
\end{figure}


\subsection{Grayscale Videos with Fixed Compression Rate}
In this subsection, we verify the proposed PnP algorithm by the following data:
\begin{list}{\labelitemi}{\leftmargin=8pt \topsep=2pt \parsep=2pt}
    \item {\texttt{Chopper wheel}} data captured by the original CACTI paper~\cite{Patrick13OE} is of spatial size $256\times256$ and $B=14$. The results of GAP-TV, DeSCI, PnP-FFDNet and PnP-FastDVDnet are shown in Fig.~\ref{fig:real_chopperwheel}, where we can see that though DeSCI, PnP-FFDNet and PnP-FastDVDnet can all provide good results. Due to the temporal correlation of video investigated in FastDVDnet, the results of PnP-FastDVDnet is having a consistent brightness and of smooth motion.
    \item {\texttt{UCF data}} captured by the video SCI system built in~\cite{Sun17OE} is of large size $1100\times 850$ with $B=10$. The results are shown in Fig.~\ref{fig:real_ucf}, which has a complicated background and a dropping ball on the left.
    It can be seen clearly that PnP-FastDVDnet provides a clean background with fine details. 
\end{list}
Again, since these data are of different sizes and compression rates, it is challenging to use the recently developed end-to-end deep neural networks~\cite{Cheng20ECCV_BIRNAT} to perform all the tasks. For instance, the training time for each task will be of weeks and it consumes a significant amount of power and memory to train the network for large-scale data such as {\texttt{UCF}}.
By contrast, in our proposed PnP framework, the same pre-trained FFDNet or FastDVDnet is used for all these tasks and the results are obtained in seconds. 

\begin{figure}[htbp!]
	\begin{center}
		\includegraphics[width=1.0\linewidth]{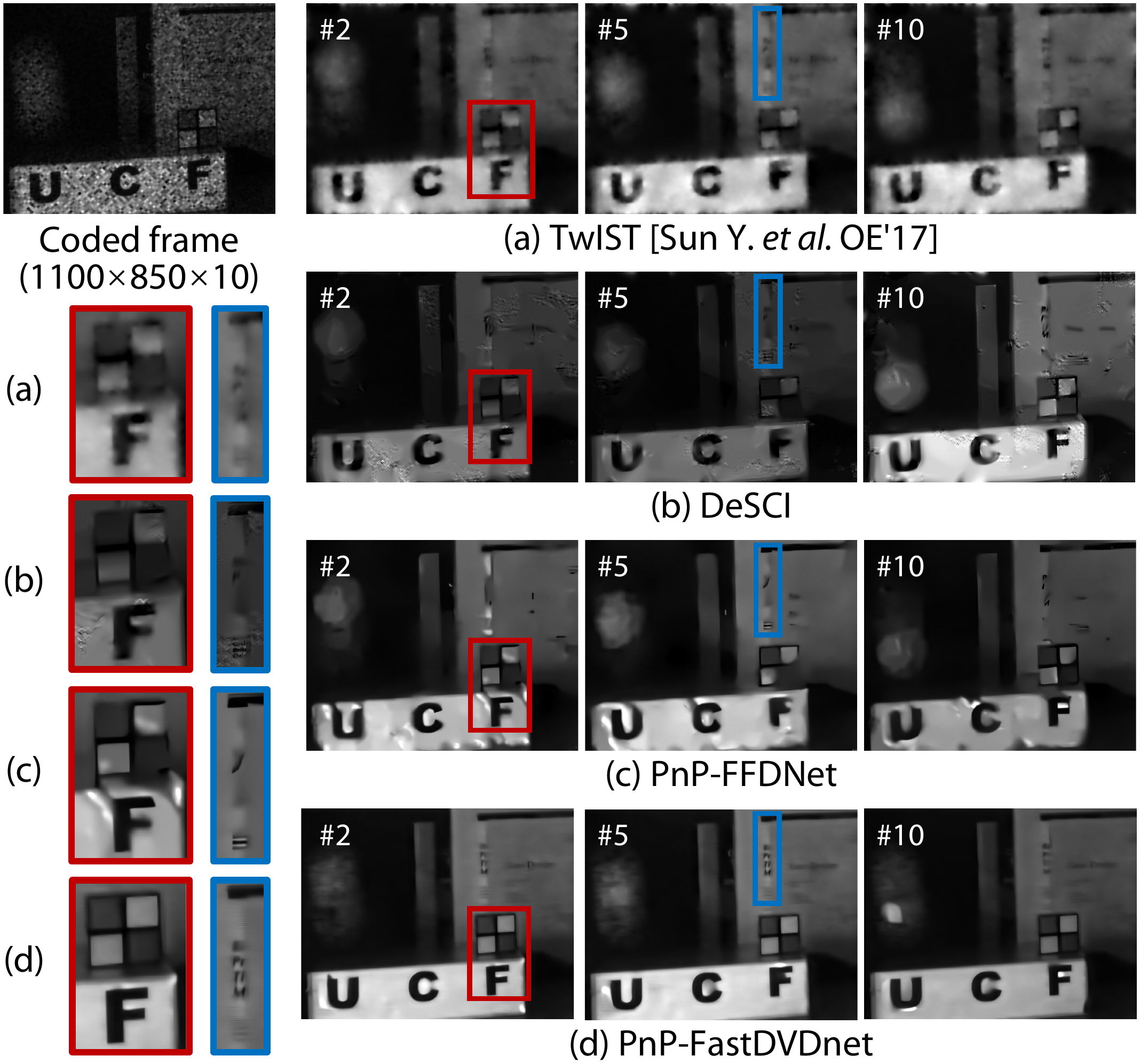}
	\end{center}
	\vspace{-4mm}
	\caption{Real data: \texttt{UCF} high-speed video SCI ($1100\times850\times10$).}
	\label{fig:real_ucf}
	\vspace{-3mm}
\end{figure}

\begin{table}[htbp!]
\caption{Running time (seconds) of real data using different algorithms.}
\vspace{-3mm}
\resizebox{1\columnwidth}{!} {
\begin{tabular}{c cV{3}cccc}
\hlineB{3}
Real dataset  & Pixel resolution & {GAP-TV} & {DeSCI} & {PnP-FFDNet} & {PnP-FastDVDnet} \\ \hlineB{3}
\texttt{chopperwheel} & $256\times256\times14$       & 11.6                        & 3185.8                     & \textbf{2.7}                             & 18.3   

\\ \hline
\texttt{hammer} color & $512\times512\times22$       & 94.5                        & 4791.0                     & \textbf{12.6}                            & 136.6           \\ \hline
\texttt{UCF}          & $1100\times850\times10$      & 300.8                       & 2938.8                    & \textbf{12.5}                            & 132.6                                \\ 
\hlineB{3}
\texttt{hand10}          & $512\times512\times10$      & 37.8                       & 2880.0                    & \textbf{19.3}                            & 29.5            \\ 
\texttt{hand20}          & $512\times512\times20$      & 88.7                       & 4320.0                    & \textbf{42.4}                            & 63.9            \\ 
\texttt{hand30}          & $512\times512\times30$      & 163.0                       & 6120.0                    & \textbf{74.7}                            & 107.7            \\ 
\texttt{hand50}          & $512\times512\times50$      & 303.4                       & 12600.0                    & \textbf{144.5}                            & 203.9            \\ 
\hlineB{3}
\end{tabular}
}
\label{Table:time_real}
\vspace{-2mm}
\end{table}

\begin{figure*}[htbp!]
	\begin{center}
		\includegraphics[width=1.0\linewidth]{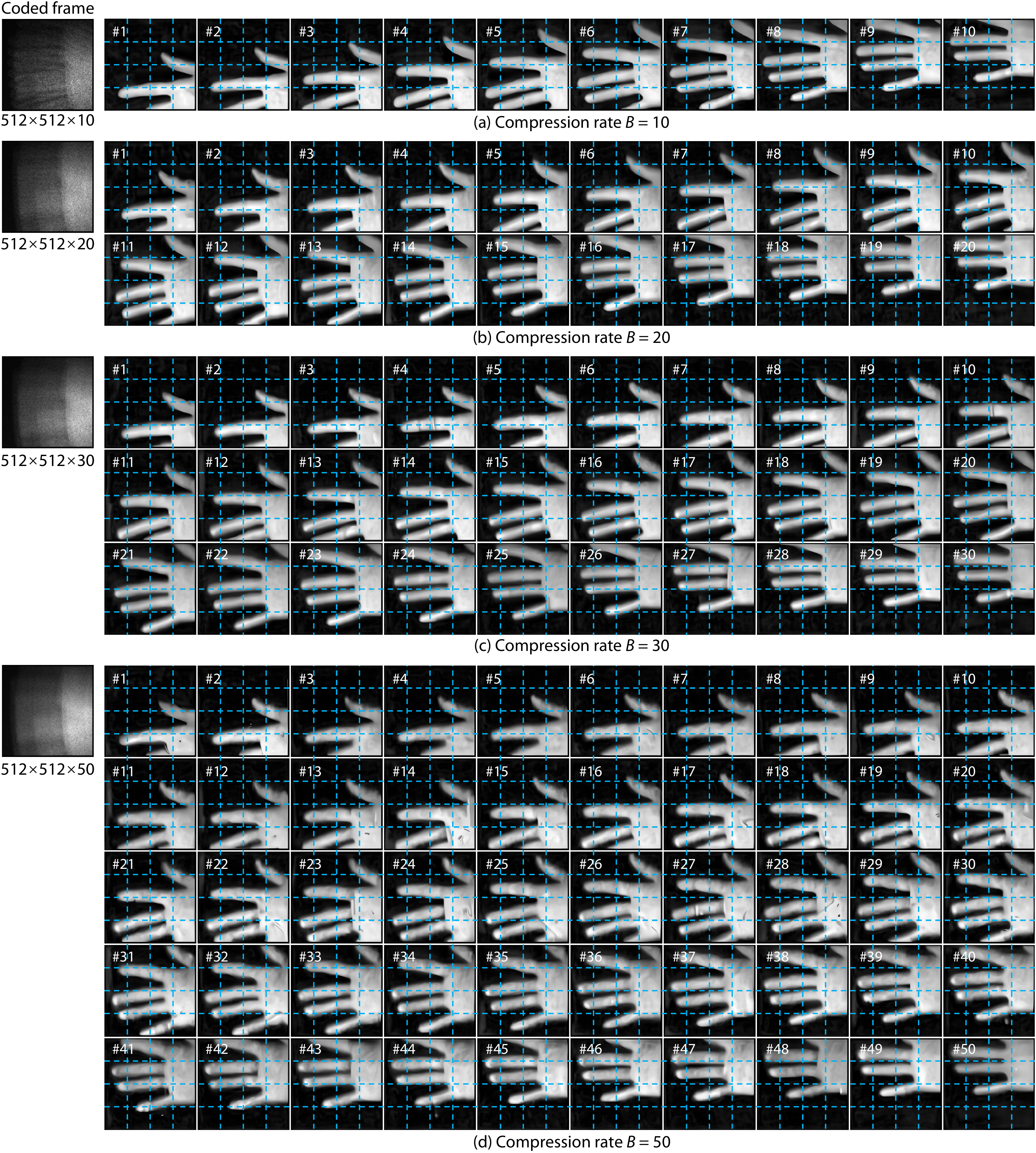}
	\end{center}
	\vspace{-3mm}
	\caption{Real data: \texttt{Hand} high-speed video SCI ($512\times512\times B$) with compression rates, $B$, vary from 10 to 50. Dashed grids are added to aid the visualization of motion details. PnP-FastDVDnet is used for the reconstruction.}
	\label{fig:real_hand}
\end{figure*}

The running time of different algorithms for these real data are shown in Table~\ref{Table:time_real}.
We can see that PnP-FFDNet, which only takes a few seconds for the reconstruction of these grayscale datasets, can provide comparable results as DeSCI, which needs hours even when performed in a frame-wise manner. PnP-FFDNet is significantly better than the speed runner-up GAP-TV (for the top two datasets) in terms of motion-blur reduction and detail preservation, as shown in Figs.~\ref{fig:real_chopperwheel} and \ref{fig:real_ucf}. PnP-FFDNet is at least more than $4\times$ faster than GAP-TV and when the data size is getting larger, the running time increases slower than GAP-TV. In this way, PnP algorithms for SCI achieves a good balance of efficiency and flexibility and PnP-FFDNet could serve as a baseline for SCI recovery.

PnP-FastDVDnet costs about $10\times$ longer than PnP-FFDNet (upper part in Table~\ref{Table:time_real}), which is the price for a higher reconstruction quality.
We also notice that the running time of PnP-FastDVDnet for large-scale data {\texttt{UCF}} is shorter than GAP-TV. This shows another gain of PnP based algorithm, \ie, ready to scale up. This will be further verified by the following {\texttt{Hand}} data.
Therefore, we recommend the PnP-FFDNet as a new baseline, and if a higher quality result is desired, PnP-FastDVDnet is a good choice with a longer running time (but still $20\times$ shorter than DeSCI).

\subsection{Grayscale Videos with Various Compression Rates }
\label{Sec:gray_hand}
Next, we test the PnP algorithms by the data captured by a recently built video SCI system in~\cite{Qiao2020_APLP}, where similar scenes were captured by different compression rates, \ie, $B=\{10,20,30,40,50\}$\footnote{Data at: \url{https://github.com/mq0829/DL-CACTI}.}. 
Unlike the data reported in the previous subsection, here the data are of the same spatial size $512\times 512$, but a compression rate of 50 will stress out the GPU memory in deep neural networks.
{As shown in Fig.~\ref{fig:quality_vary_codenum}. This is another way to construct the large-scale data.}
{Another challenge in video SCI is that though high compression rate results have been reported before, whether the reconstructed video can resolve such a high-speed motion is still a question.}

{To address these concerns, hereby, we show the reconstruction of {\texttt{Hand}} data with $B=10,20,30,50$ in Fig.~\ref{fig:real_hand}, where we can see that  at $B=50$, each frame is different from the previous one, and this leads to a high-speed motion to at least a few pixels motion per reconstructed frame.}
Regarding the running time, it can be seen from Table~\ref{Table:time_real} that both PnP-FFDNet and PnP-FastDVDnet are faster than GAP-TV. At $B=50$, PnP-FFDNet can finish the reconstruction of one measurement in 2.4 minutes and PnP-FastDVDnet needs 3.4 minutes but providing better results. By contrast, DeSCI needs 210 minutes (3.5 hours) to reconstruct 50 frames from a snapshot.

We have also tried to reconstruct this video by training the networks proposed in~\cite{Qiao2020_APLP} and \cite{Cheng20ECCV_BIRNAT}; however, the quality of the results from the trained networks for this {\texttt{Hand}} dataset are poor when $B>20$, mainly due to the high-speed motions.
By contrast, in Fig.~\ref{fig:real_hand}, we can see clear details are reconstructed by the proposed PnP-FastDVDnet.  
Therefore, it is confident to state that the video SCI system along with our proposed PnP algorithm can achieve a compression rate of 50.
When the camera is working at 50 fps~\cite{Qiao2020_CACTI}, the built system can be used to capture high-speed videos at 2500 fps with high quality reconstruction. 
Due to the space limit, we do not show results of other data here and more results can be found in the supplementary videos.

\begin{figure}[!htbp]
	\begin{center}
		\includegraphics[width=1\linewidth]{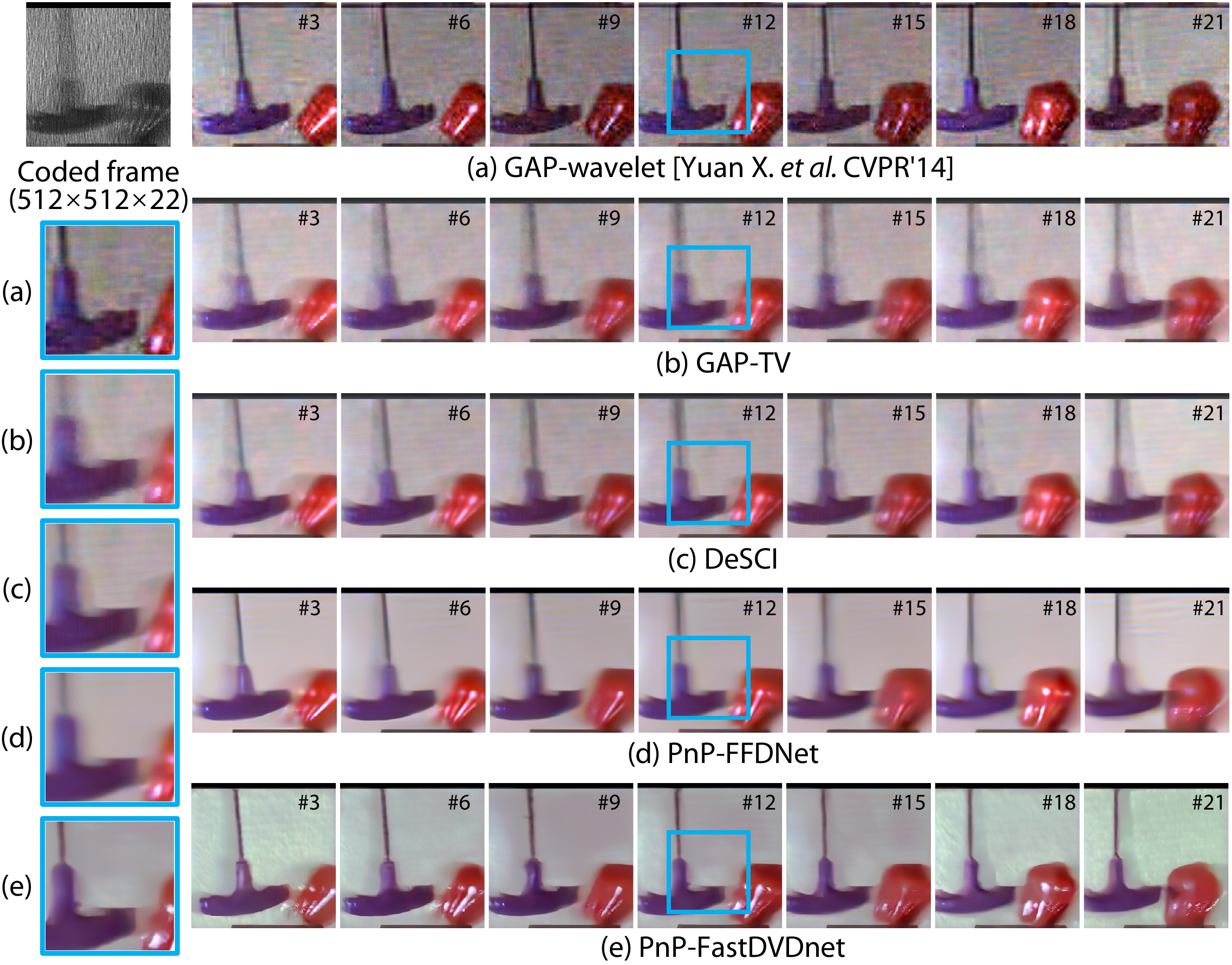}
	\end{center}
	\vspace{-3mm}
	\caption{Real data: \texttt{Hammer} color video SCI ($512\times512\times 3\times22$).}
	\vspace{-4mm}
	\label{fig:real_color_hammer}
\end{figure}

\subsection{Color Videos}
Lastly, we verify the proposed algorithm on the color video SCI data captured by~\cite{Yuan14CVPR}, which has the same model as described in Section~\ref{Sec:jointcsci}. 
Following the procedure in the mid-scale color data, an RGB video of $B=22$ frames with size of $512\times 512 \times 3$ is reconstructed from a single Bayer mosaic measurement shown in Fig.~\ref{fig:real_color_hammer} of the data \texttt{hammer}. 
Along with the running time in Table~\ref{Table:time_real}, we can see that PnP-FFDnet, which only takes about 12 seconds for reconstruction, can provide comparable results as DeSCI, which needs hours. 
GAP-wavelet~\cite{Yuan14CVPR} cannot remove noise in the background and GAP-TV shows blurry results.
PnP-FFDnet shows sharper edges than DeSCI with a clean background.
PnP-FastDVDnet reconstructs sharper boundaries and finer details of the hammer than DeSCI and PnP-FFDNet, but needs 136 seconds (2.27 minutes) for the reconstruction.
We do notice the greenish background of PnP-FastDVDnet, which may come from the smoothing artifacts of brightness across different frames.
%
We will test more color video SCI data using the proposed PnP algorithms in the future. 

\section{Conclusions \label{Sec:Con}}
We proposed plug-and-play algorithms for the reconstruction of snapshot compressive video imaging systems. By integrating deep denoisers into the PnP framework, we not only get excellent results on both simulation and real datasets, but also provide reconstruction in a short time with sufficient flexibility. 
Convergence results of PnP-GAP are proved and we first time show that SCI can be used in large-scale (HD, FHD and UHD) daily life videos. This paves the way of practical applications of SCI.

Regarding the future work, one direction is 
to incorporating an efficient demosaicing network to speed up the reconstruction and also improve the video quality. The other direction is to build a real large-scale video SCI system to be used in advanced cameras~\cite{brady2012multiscale,Brady:20}.

\section*{Acknowledgments.} The work of Jinli Suo and Qionghai Dai is partially supported by NSFC 61722110, 61931012, 61631009 and Beijing Municipal Science \& Technology Commission (BMSTC) (No. Z181100003118014). X. Yuan and Y. Liu contribute
equally to this paper.

\bibliographystyle{IEEEtran}
\bibliography{reference_ECCV.bib,reference_sideinfor.bib,reference_xin.bib}

\begin{thebibliography}{10}
\providecommand{\url}[1]{#1}
\csname url@samestyle\endcsname
\providecommand{\newblock}{\relax}
\providecommand{\bibinfo}[2]{#2}
\providecommand{\BIBentrySTDinterwordspacing}{\spaceskip=0pt\relax}
\providecommand{\BIBentryALTinterwordstretchfactor}{4}
\providecommand{\BIBentryALTinterwordspacing}{\spaceskip=\fontdimen2\font plus
\BIBentryALTinterwordstretchfactor\fontdimen3\font minus
  \fontdimen4\font\relax}
\providecommand{\BIBforeignlanguage}[2]{{%
\expandafter\ifx\csname l@#1\endcsname\relax
\typeout{** WARNING: IEEEtran.bst: No hyphenation pattern has been}%
\typeout{** loaded for the language `#1'. Using the pattern for}%
\typeout{** the default language instead.}%
\else
\language=\csname l@#1\endcsname
\fi
#2}}
\providecommand{\BIBdecl}{\relax}
\BIBdecl

\bibitem{Altmanneaat2298}
Y.~Altmann, S.~McLaughlin, M.~J. Padgett, V.~K. Goyal, A.~O. Hero, and
  D.~Faccio, ``Quantum-inspired computational imaging,'' \emph{Science}, vol.
  361, no. 6403, 2018.

\bibitem{Mait18CI}
J.~N.~Mait, G.~W. Euliss, and R.~A. Athale, ``Computational imaging,''
  \emph{Adv. Opt. Photon.}, vol.~10, no.~2, pp. 409--483, Jun 2018.

\bibitem{BradyNature12}
D.~J. Brady, M.~E. Gehm, R.~A. Stack, D.~L. Marks, D.~S. Kittle, D.~R. Golish,
  E.~M. Vera, and S.~D. Feller, ``Multiscale gigapixel photography,''
  \emph{Nature}, vol. 486, no. 7403, pp. 386--389, 2012.

\bibitem{Brady18Optica}
D.~J. Brady, W.~Pang, H.~Li, Z.~Ma, Y.~Tao, and X.~Cao, ``Parallel cameras,''
  \emph{Optica}, vol.~5, no.~2, 2018.

\bibitem{Ouyang2018DeepLM}
W.~Ouyang, A.~I. Aristov, M.~Lelek, X.~F. Hao, and C.~Zimmer, ``Deep learning
  massively accelerates super-resolution localization microscopy,''
  \emph{Nature Biotechnology}, vol.~36, pp. 460--468, 2018.

\bibitem{Brady15AOP}
D.~J. Brady, A.~Mrozack, K.~MacCabe, and P.~Llull, ``Compressive tomography,''
  \emph{Advances in Optics and Photonics}, vol.~7, no.~4, p. 756, 2015.

\bibitem{Tsai15OL}
T.-H. Tsai, P.~Llull, X.~Yuan, L.~Carin, and D.~J. Brady, ``Spectral-temporal
  compressive imaging,'' \emph{Optics Letters}, vol.~40, no.~17, pp.
  4054--4057, Sep 2015.

\bibitem{Patrick13OE}
P.~Llull, X.~Liao, X.~Yuan, J.~Yang, D.~Kittle, L.~Carin, G.~Sapiro, and D.~J.
  Brady, ``Coded aperture compressive temporal imaging,'' \emph{Optics
  Express}, vol.~21, no.~9, pp. 10\,526--10\,545, 2013.

\bibitem{Wagadarikar08CASSI}
A.~Wagadarikar, R.~John, R.~Willett, and D.~Brady, ``Single disperser design
  for coded aperture snapshot spectral imaging,'' \emph{Applied Optics},
  vol.~47, no.~10, pp. B44--B51, 2008.

\bibitem{Yuan14CVPR}
X.~Yuan, P.~Llull, X.~Liao, J.~Yang, D.~J. Brady, G.~Sapiro, and L.~Carin,
  ``Low-cost compressive sensing for color video and depth,'' in \emph{IEEE
  Conference on Computer Vision and Pattern Recognition (CVPR)}, 2014, Journal
  Article, pp. 3318--3325.

\bibitem{Wagadarikar09CASSI}
A.~A. Wagadarikar, N.~P. Pitsianis, X.~Sun, and D.~J. Brady, ``Video rate
  spectral imaging using a coded aperture snapshot spectral imager,''
  \emph{Optics Express}, vol.~17, no.~8, pp. 6368--6388, 2009.

\bibitem{Hitomi11ICCV}
Y.~Hitomi, J.~Gu, M.~Gupta, T.~Mitsunaga, and S.~K. Nayar, ``Video from a
  single coded exposure photograph using a learned over-complete dictionary,''
  in \emph{2011 International Conference on Computer Vision}.\hskip 1em plus
  0.5em minus 0.4em\relax IEEE, 2011, pp. 287--294.

\bibitem{Reddy11CVPR}
D.~Reddy, A.~Veeraraghavan, and R.~Chellappa, ``{\rm P2C2}: Programmable pixel
  compressive camera for high speed imaging,'' in \emph{IEEE Conference on
  Computer Vision and Pattern Recognition (CVPR)}, Conference Proceedings, pp.
  329--336.

\bibitem{Yuan16BOE}
X.~Yuan and S.~Pang, ``Structured illumination temporal compressive
  microscopy,'' \emph{Biomedical Optics Express}, vol.~7, pp. 746--758, 2016.

\bibitem{Deng19_sin}
C.~{Deng}, Y.~{Zhang}, Y.~{Mao}, J.~{Fan}, J.~{Suo}, Z.~{Zhang}, and Q.~{Dai},
  ``Sinusoidal sampling enhanced compressive camera for high speed imaging,''
  \emph{IEEE Transactions on Pattern Analysis and Machine Intelligence}, pp.
  1--1, 2019.

\bibitem{Gehm07}
M.~E. Gehm, R.~John, D.~J. Brady, R.~M. Willett, and T.~J. Schulz,
  ``Single-shot compressive spectral imaging with a dual-disperser
  architecture,'' \emph{Optics Express}, vol.~15, no.~21, pp. 14\,013--14\,027,
  2007.

\bibitem{Miao19ICCV}
X.~Miao, X.~Yuan, Y.~Pu, and V.~Athitsos, ``$\lambda$-net: Reconstruct
  hyperspectral images from a snapshot measurement,'' in \emph{IEEE/CVF
  Conference on Computer Vision (ICCV)}, 2019.

\bibitem{Yuan15JSTSP}
X.~Yuan, T.-H. Tsai, R.~Zhu, P.~Llull, D.~Brady, and L.~Carin, ``Compressive
  hyperspectral imaging with side information,'' \emph{IEEE Journal of Selected
  Topics in Signal Processing}, vol.~9, no.~6, pp. 964--976, September 2015.

\bibitem{Qiao2020_APLP}
M.~Qiao, Z.~Meng, J.~Ma, and X.~Yuan, \emph{APL Photonics}, vol.~5, no.~3, p.
  030801, 2020.

\bibitem{Sun17OE}
Y.~Sun, X.~Yuan, and S.~Pang, ``Compressive high-speed stereo imaging,''
  \emph{Opt Express}, vol.~25, no.~15, pp. 18\,182--18\,190, 2017.

\bibitem{Bioucas-Dias2007TwIST}
J.~M. Bioucas-Dias and M.~A. Figueiredo, ``A new twist: Two-step iterative
  shrinkage/thresholding algorithms for image restoration,'' \emph{IEEE
  Transactions on Image processing}, vol.~16, no.~12, pp. 2992--3004, 2007.

\bibitem{Yang14GMMonline}
J.~Yang, X.~Liao, X.~Yuan, P.~Llull, D.~J. Brady, G.~Sapiro, and L.~Carin,
  ``Compressive sensing by learning a {G}aussian mixture model from
  measurements,'' \emph{IEEE Transaction on Image Processing}, vol.~24, no.~1,
  pp. 106--119, January 2015.

\bibitem{Yang14GMM}
J.~Yang, X.~Yuan, X.~Liao, P.~Llull, G.~Sapiro, D.~J. Brady, and L.~Carin,
  ``Video compressive sensing using {G}aussian mixture models,'' \emph{IEEE
  Transaction on Image Processing}, vol.~23, no.~11, pp. 4863--4878, November
  2014.

\bibitem{Yuan16ICIP_GAP}
X.~Yuan, ``Generalized alternating projection based total variation
  minimization for compressive sensing,'' in \emph{2016 IEEE International
  Conference on Image Processing (ICIP)}, Sept 2016, pp. 2539--2543.

\bibitem{Liao14GAP}
X.~Liao, H.~Li, and L.~Carin, ``Generalized alternating projection for
  weighted-$\ell_{2,1}$ minimization with applications to model-based
  compressive sensing,'' \emph{SIAM Journal on Imaging Sciences}, vol.~7,
  no.~2, pp. 797--823, 2014.

\bibitem{Liu18TPAMI}
Y.~Liu, X.~Yuan, J.~Suo, D.~Brady, and Q.~Dai, ``Rank minimization for snapshot
  compressive imaging,'' \emph{IEEE Transactions on Pattern Analysis and
  Machine Intelligence}, vol.~41, no.~12, pp. 2990--3006, Dec 2019.

\bibitem{Ma19ICCV}
J.~Ma, X.~Liu, Z.~Shou, and X.~Yuan, ``Deep tensor admm-net for snapshot
  compressive imaging,'' in \emph{IEEE/CVF Conference on Computer Vision
  (ICCV)}, 2019.

\bibitem{Li2020ICCP}
Y.~{Li}, M.~{Qi}, R.~{Gulve}, M.~{Wei}, R.~{Genov}, K.~N. {Kutulakos}, and
  W.~{Heidrich}, ``End-to-end video compressive sensing using
  anderson-accelerated unrolled networks,'' in \emph{2020 IEEE International
  Conference on Computational Photography (ICCP)}, 2020, pp. 1--12.

\bibitem{Cheng20ECCV_BIRNAT}
Z.~Cheng, R.~Lu, Z.~Wang, H.~Zhang, B.~Chen, Z.~Meng, and X.~Yuan, ``Birnat:
  Bidirectional recurrent neural networks with adversarial training for video
  snapshot compressive imaging,'' in \emph{European Conference on Computer
  Vision (ECCV)}, August 2020.

\bibitem{Yuan13ICIP}
X.~Yuan, J.~Yang, P.~Llull, X.~Liao, G.~Sapiro, D.~J. Brady, and L.~Carin,
  ``Adaptive temporal compressive sensing for video,'' \emph{IEEE International
  Conference on Image Processing}, pp. 1--4, 2013.

\bibitem{Cao16SPM}
X.~Cao, T.~Yue, X.~Lin, S.~Lin, X.~Yuan, Q.~Dai, L.~Carin, and D.~J. Brady,
  ``Computational snapshot multispectral cameras: Toward dynamic capture of the
  spectral world,'' \emph{IEEE Signal Processing Magazine}, vol.~33, no.~5, pp.
  95--108, 2016.

\bibitem{Wang18PAMI}
L.~{Wang}, Z.~{Xiong}, H.~{Huang}, G.~{Shi}, F.~{Wu}, and W.~{Zeng},
  ``High-speed hyperspectral video acquisition by combining nyquist and
  compressive sampling,'' \emph{IEEE Transactions on Pattern Analysis and
  Machine Intelligence}, vol.~41, no.~4, pp. 857--870, April 2019.

\bibitem{Meng2020_OL_SHEM}
Z.~Meng, M.~Qiao, J.~Ma, Z.~Yu, K.~Xu, and X.~Yuan, ``Snapshot multispectral
  endomicroscopy,'' \emph{Opt. Lett.}, vol.~45, no.~14, pp. 3897--3900, Jul
  2020.

\bibitem{Meng20ECCV_TSAnet}
Z.~Meng, J.~Ma, and X.~Yuan, ``End-to-end low cost compressive spectral imaging
  with spatial-spectral self-attention,'' in \emph{European Conference on
  Computer Vision (ECCV)}, August 2020.

\bibitem{Qiao2020_CACTI}
M.~Qiao, X.~Liu, and X.~Yuan, ``Snapshot spatial-temporal compressive
  imaging,'' \emph{Opt. Lett.}, 2020.

\bibitem{Llull15Optica}
P.~Llull, X.~Yuan, L.~Carin, and D.~J. Brady, ``Image translation for
  single-shot focal tomography,'' \emph{Optica}, vol.~2, no.~9, pp. 822--825,
  2015.

\bibitem{Yuan16AO}
X.~Yuan, X.~Liao, P.~Llull, D.~Brady, and L.~Carin, ``Efficient patch-based
  approach for compressive depth imaging,'' \emph{Applied Optics}, vol.~55,
  no.~27, pp. 7556--7564, Sep 2016.

\bibitem{Tsai15OE}
T.-H. Tsai, X.~Yuan, and D.~J. Brady, ``Spatial light modulator based color
  polarization imaging,'' \emph{Optics Express}, vol.~23, no.~9, pp.
  11\,912--11\,926, May 2015.

\bibitem{Candes06ITT}
C.~Emmanuel, J.~Romberg, and T.~Tao, ``Robust uncertainty principles: Exact
  signal reconstruction from highly incomplete frequency information,''
  \emph{IEEE Transactions on Information Theory}, vol.~52, no.~2, pp. 489--509,
  February 2006.

\bibitem{Donoho06ITT}
D.~L. Donoho, ``Compressed sensing,'' \emph{IEEE Transactions on Information
  Theory}, vol.~52, no.~4, pp. 1289--1306, April 2006.

\bibitem{Rudin92_TV}
L.~I. Rudin, S.~Osher, and E.~Fatemi, ``Nonlinear total variation based noise
  removal algorithms,'' \emph{Physica D: Nonlinear Phenomena}, vol.~60, no.
  1-4, pp. 259--268, 1992.

\bibitem{Gu14CVPR}
S.~Gu, L.~Zhang, W.~Zuo, and X.~Feng, ``Weighted nuclear norm minimization with
  application to image denoising,'' in \emph{IEEE Conference on Computer Vision
  and Pattern Recognition (CVPR)}, 2014, pp. 2862--2869.

\bibitem{Boyd11ADMM}
S.~Boyd, N.~Parikh, E.~Chu, B.~Peleato, and J.~Eckstein, ``Distributed
  optimization and statistical learning via the alternating direction method of
  multipliers,'' \emph{Foundations and Trends in Machine Learning}, vol.~3,
  no.~1, pp. 1--122, January 2011.

\bibitem{zhang2017beyond}
K.~Zhang, W.~Zuo, Y.~Chen, D.~Meng, and L.~Zhang, ``Beyond a {Gaussian}
  denoiser: Residual learning of deep {CNN} for image denoising,'' \emph{IEEE
  Transactions on Image Processing}, vol.~26, no.~7, pp. 3142--3155, 2017.

\bibitem{Iliadis18DSPvideoCS}
M.~Iliadis, L.~Spinoulas, and A.~K. Katsaggelos, ``Deep fully-connected
  networks for video compressive sensing,'' \emph{Digital Signal Processing},
  vol.~72, pp. 9--18, 2018.

\bibitem{Jin17TIP}
K.~H. Jin, M.~T. McCann, E.~Froustey, and M.~Unser, ``Deep convolutional neural
  network for inverse problems in imaging,'' \emph{IEEE Transactions on Image
  Processing}, vol.~26, no.~9, pp. 4509--4522, Sept 2017.

\bibitem{Kulkarni2016CVPR}
K.~Kulkarni, S.~Lohit, P.~Turaga, R.~Kerviche, and A.~Ashok, ``Reconnet:
  Non-iterative reconstruction of images from compressively sensed random
  measurements,'' in \emph{CVPR}, 2016.

\bibitem{LearningInvert2017}
A.~Mousavi and R.~G. Baraniuk, ``Learning to invert: Signal recovery via deep
  convolutional networks,'' in \emph{2017 IEEE International Conference on
  Acoustics, Speech and Signal Processing (ICASSP)}, March 2017, pp.
  2272--2276.

\bibitem{George17lensless}
\BIBentryALTinterwordspacing
A.~Sinha, J.~Lee, S.~Li, and G.~Barbastathis, ``Lensless computational imaging
  through deep learning,'' \emph{Optica}, vol.~4, no.~9, pp. 1117--1125, Sep
  2017. [Online]. Available:
  \url{http://www.osapublishing.org/optica/abstract.cfm?URI=optica-4-9-1117}
\BIBentrySTDinterwordspacing

\bibitem{Yuan18OE}
X.~Yuan and Y.~Pu, ``Parallel lensless compressive imaging via deep
  convolutional neural networks,'' \emph{Optics Express}, vol.~26, no.~2, pp.
  1962--1977, Jan 2018.

\bibitem{Yoshida18ECCV}
M.~Yoshida, A.~Torii, M.~Okutomi, K.~Endo, Y.~Sugiyama, R.-i. Taniguchi, and
  H.~Nagahara, ``Joint optimization for compressive video sensing and
  reconstruction under hardware constraints,'' in \emph{The European Conference
  on Computer Vision (ECCV)}, September 2018.

\bibitem{Venkatakrishnan_13PnP}
S.~V. {Venkatakrishnan}, C.~A. {Bouman}, and B.~{Wohlberg}, ``Plug-and-play
  priors for model based reconstruction,'' in \emph{2013 IEEE Global Conference
  on Signal and Information Processing}, 2013, pp. 945--948.

\bibitem{Sreehari16PnP}
S.~{Sreehari}, S.~V. {Venkatakrishnan}, B.~{Wohlberg}, G.~T. {Buzzard}, L.~F.
  {Drummy}, J.~P. {Simmons}, and C.~A. {Bouman}, ``Plug-and-play priors for
  bright field electron tomography and sparse interpolation,'' \emph{IEEE
  Transactions on Computational Imaging}, vol.~2, no.~4, pp. 408--423, 2016.

\bibitem{Chan2017PlugandPlayAF}
S.~H. Chan, X.~Wang, and O.~A. Elgendy, ``Plug-and-play {ADMM} for image
  restoration: Fixed-point convergence and applications,'' \emph{IEEE
  Transactions on Computational Imaging}, vol.~3, pp. 84--98, 2017.

\bibitem{Ryu2019PlugandPlayMP}
E.~K. Ryu, J.~Liu, S.~Wang, X.~Chen, Z.~Wang, and W.~Yin, ``Plug-and-play
  methods provably converge with properly trained denoisers,'' in \emph{ICML},
  2019.

\bibitem{Zhang17SPM_deepdenoise}
L.~Zhang and W.~Zuo, ``Image restoration: From sparse and low-rank priors to
  deep priors [lecture notes],'' \emph{IEEE Signal Processing Magazine},
  vol.~34, no.~5, pp. 172--179, 2017.

\bibitem{Zhang18TIP_FFDNet}
\BIBentryALTinterwordspacing
K.~Zhang, W.~Zuo, and L.~Zhang, ``{FFDNet}: Toward a fast and flexible solution
  for {CNN}-based image denoising.'' \emph{IEEE Trans. Image Processing},
  vol.~27, no.~9, pp. 4608--4622, 2018. [Online]. Available:
  \url{http://dblp.uni-trier.de/db/journals/tip/tip27.html#ZhangZZ18}
\BIBentrySTDinterwordspacing

\bibitem{Yuan20CVPR}
X.~Yuan, Y.~Liu, J.~Suo, and Q.~Dai, ``Plug-and-play algorithms for large-scale
  snapshot compressive imaging,'' in \emph{IEEE Conference on Computer Vision
  and Pattern Recognition (CVPR)}, 2020.

\bibitem{Jalali19TIT_SCI}
S.~Jalali and X.~Yuan, ``Snapshot compressed sensing: Performance bounds and
  algorithms,'' \emph{IEEE Transactions on Information Theory}, vol.~65,
  no.~12, pp. 8005--8024, Dec 2019.

\bibitem{Tassano_2020_CVPR}
M.~Tassano, J.~Delon, and T.~Veit, ``Fastdvdnet: Towards real-time deep video
  denoising without flow estimation,'' in \emph{Proceedings of the IEEE/CVF
  Conference on Computer Vision and Pattern Recognition (CVPR)}, June 2020.

\bibitem{Candes05compressed}
E.~Cand\'es, J.~Romberg, and T.~Tao, ``Signal recovery from incomplete and
  inaccurate measurements,'' \emph{Comm. Pure Appl. Math}, vol.~59, no.~8, pp.
  1207--1223, 2005.

\bibitem{donoho2006compressed}
D.~L. Donoho \emph{et~al.}, ``Compressed sensing,'' \emph{IEEE Transactions on
  information theory}, vol.~52, no.~4, pp. 1289--1306, 2006.

\bibitem{Jalali18ISIT}
S.~Jalali and X.~Yuan, ``Compressive imaging via one-shot measurements,'' in
  \emph{IEEE International Symposium on Information Theory (ISIT)}, 2018.

\bibitem{Tassano_19ICIP_DVDnet}
M.~{Tassano}, J.~{Delon}, and T.~{Veit}, ``Dvdnet: A fast network for deep
  video denoising,'' in \emph{2019 IEEE International Conference on Image
  Processing (ICIP)}, 2019, pp. 1805--1809.

\bibitem{Beck09IST}
A.~Beck and M.~Teboulle, ``A fast iterative shrinkage-thresholding algorithm
  for linear inverse problems,'' \emph{SIAM J. Img. Sci.}, vol.~2, no.~1, pp.
  183--202, Mar. 2009.

\bibitem{Zheng20_PRJ_PnP-CASSI}
S.~Zheng, Y.~Liu, Z.~Meng, M.~Qiao, Z.~Tong, X.~Yang, S.~Han, and X.~Yuan,
  ``Deep plug-and-play priors for spectral snapshot compressive imaging,''
  \emph{Photonics Research}, vol.~9, Jan 2021.

\bibitem{Zha2020TIP_JPG}
Z.~{Zha}, X.~{Yuan}, B.~{Wen}, J.~{Zhang}, J.~{Zhou}, and C.~{Zhu}, ``Image
  restoration using joint patch-group-based sparse representation,'' \emph{IEEE
  Transactions on Image Processing}, vol.~29, pp. 7735--7750, 2020.

\bibitem{Zha2020TIP_NSSP}
Z.~{Zha}, X.~{Yuan}, J.~{Zhou}, C.~{Zhu}, and B.~{Wen}, ``Image restoration via
  simultaneous nonlocal self-similarity priors,'' \emph{IEEE Transactions on
  Image Processing}, vol.~29, pp. 8561--8576, 2020.

\bibitem{Stanley05_TV}
S.~Osher, M.~Burger, D.~Goldfarb, J.~Xu, and W.~Yin, ``An iterative
  regularization method for total variation-based image restoration,''
  \emph{Multiscale Modeling \& Simulation}, vol.~4, no.~2, pp. 460--489, 2005.

\bibitem{yang2013efficient}
S.~Yang, J.~Wang, W.~Fan, X.~Zhang, P.~Wonka, and J.~Ye, ``An efficient admm
  algorithm for multidimensional anisotropic total variation regularization
  problems,'' in \emph{Proceedings of the 19th ACM SIGKDD international
  conference on Knowledge discovery and data mining}, 2013, pp. 641--649.

\bibitem{Crouse98_wavelet}
M.~S. {Crouse}, R.~D. {Nowak}, and R.~G. {Baraniuk}, ``Wavelet-based
  statistical signal processing using hidden markov models,'' \emph{IEEE
  Transactions on Signal Processing}, vol.~46, no.~4, pp. 886--902, 1998.

\bibitem{Starck02_Curvelet}
{Jean-Luc Starck}, E.~J. {Candes}, and D.~L. {Donoho}, ``The curvelet transform
  for image denoising,'' \emph{IEEE Transactions on Image Processing}, vol.~11,
  no.~6, pp. 670--684, 2002.

\bibitem{Aharon06TSP}
M.~Aharon, M.~Elad, and A.~Bruckstein, ``{K-SVD}: An algorithm for designing
  overcomplete dictionaries for sparse representation,'' \emph{IEEE
  Transactions on Signal Processing}, vol.~54, no.~11, pp. 4311--4322, 2006.

\bibitem{Mairal_09ICCV_LSSC}
J.~{Mairal}, F.~{Bach}, J.~{Ponce}, G.~{Sapiro}, and A.~{Zisserman},
  ``Non-local sparse models for image restoration,'' in \emph{2009 IEEE 12th
  International Conference on Computer Vision}, 2009, pp. 2272--2279.

\bibitem{Gu17IJCV}
S.~Gu, Q.~Xie, D.~Meng, W.~Zuo, X.~Feng, and L.~Zhang, ``Weighted nuclear norm
  minimization and its applications to low level vision,'' \emph{International
  Journal of Computer Vision}, vol. 121, no.~2, pp. 183--208, 2017.

\bibitem{Dabov07BM3D}
K.~Dabov, A.~Foi, V.~Katkovnik, and K.~Egiazarian, ``Image denoising by sparse
  3d transform-domain collaborative filtering,'' \emph{IEEE Transactions on
  Image Processing}, vol.~16, no.~8, pp. 2080--2095, August 2007.

\bibitem{Maggioni2012VideoDD}
M.~Maggioni, G.~Boracchi, A.~Foi, and K.~O. Egiazarian, ``Video denoising,
  deblocking, and enhancement through separable 4-d nonlocal spatiotemporal
  transforms,'' \emph{IEEE Transactions on Image Processing}, vol.~21, pp.
  3952--3966, 2012.

\bibitem{XieNIPS2012_deepDN}
J.~Xie, L.~Xu, and E.~Chen, ``Image denoising and inpainting with deep neural
  networks,'' in \emph{Advances in Neural Information Processing Systems 25},
  2012, pp. 341--349.

\bibitem{zha2020power}
Z.~{Zha}, X.~{Yuan}, J.~T. {Zhou}, J.~{Zhou}, B.~{Wen}, and C.~{Zhu}, ``The
  power of triply complementary priors for image compressive sensing,'' in
  \emph{2020 IEEE International Conference on Image Processing (ICIP)}, 2020,
  pp. 983--987.

\bibitem{GharbiACM16}
M.~Gharbi, G.~Chaurasia, S.~Paris, and F.~Durand, ``Deep joint demosaicking and
  denoising,'' \emph{ACM Trans. Graph.}, vol.~35, no.~6, Nov. 2016.

\bibitem{LiDemosaicing08}
X.~Li, B.~Gunturk, and L.~Zhang, ``{Image demosaicing: a systematic survey},''
  in \emph{Visual Communications and Image Processing 2008}, vol. 6822.\hskip
  1em plus 0.5em minus 0.4em\relax SPIE, 2008, pp. 489 -- 503.

\bibitem{Brady:20}
D.~J. Brady, L.~Fang, and Z.~Ma, ``Deep learning for camera data acquisition,
  control, and image estimation,'' \emph{Adv. Opt. Photon.}, vol.~12, no.~4,
  pp. 787--846, Dec 2020.

\bibitem{malvar2004high-quality}
R.~Malvar, L.-w. He, and R.~Cutler, ``High-quality linear interpolation for
  demosaicing of bayer-patterned color images,'' in \emph{International
  Conference of Acoustic, Speech and Signal Processing}, May 2004.

\bibitem{Menon07}
D.~{Menon}, S.~{Andriani}, and G.~{Calvagno}, ``Demosaicing with directional
  filtering and a posteriori decision,'' \emph{IEEE Transactions on Image
  Processing}, vol.~16, no.~1, pp. 132--141, 2007.

\bibitem{Wang04imagequality}
Z.~Wang, A.~C. Bovik, H.~R. Sheikh, E.~P. Simoncelli \emph{et~al.}, ``Image
  quality assessment: From error visibility to structural similarity,''
  \emph{IEEE Transactions on Image Processing}, vol.~13, no.~4, pp. 600--612,
  2004.

\bibitem{brady2012multiscale}
D.~Brady, M.~Gehm, R.~Stack, D.~Marks, D.~Kittle, D.~Golish, E.~Vera, and
  S.~Feller, ``Multiscale gigapixel photography,'' \emph{Nature}, vol. 486, no.
  7403, pp. 386--389, 2012.

\end{thebibliography}

\end{document}